\newtheorem{theorem}{Theorem}
\newtheorem{lemma}[theorem]{Lemma}
\newtheorem{definition}[theorem]{Definition}
\newtheorem{example}[theorem]{Example}
\newtheorem{proposition}[theorem]{Proposition}
\newtheorem{remark}[theorem]{Remark}
\DeclareMathOperator{\landau}{O}
\DeclareMathOperator{\var}{var}
\DeclareMathOperator{\vd}{vd}
\DeclareMathOperator{\cn}{cn}
\DeclareMathOperator{\jfa}{JFA}
\DeclareMathOperator{\co}{co}
\DeclareMathOperator{\lforth}{l-forth}
\DeclareMathOperator{\rforth}{r-forth}
\DeclareMathOperator{\lback}{l-back}
\DeclareMathOperator{\rback}{r-back}
\DeclareMathOperator{\match}{match}
\newcommand{\cent}{\ensuremath{\text{\textcent}}}
\newcommand{\bbN}{\ensuremath{\mathbb{N}}}
\newcommand{\ta}{\ensuremath{\mathtt{a}}}
\newcommand{\tb}{\ensuremath{\mathtt{b}}}
\newcommand{\varpos}[1]{\ensuremath{\mbox{varpos}_{#1}}}
\newcommand{\PL}{\ensuremath{L_{\Sigma}}}
\begin{document}

\title{A Polynomial Time Match Test for Large Classes of Extended Regular Expressions\thanks{This document is a full version (i.\,e., it contains all proofs) of the conference paper \cite{rei:apo}.}}

\author[1]{Daniel Reidenbach}
\author[1,2]{Markus~L.~Schmid}

\affil[1]{Department of Computer Science, Loughborough University,\\ Loughborough, Leicestershire, LE11 3TU, United Kingdom \texttt{D.Reidenbach@lboro.ac.uk}}
\affil[2]{Fachbereich 4 -- Abteilung Informatikwissenschaften, Universit\"at Trier, 54286 Trier, Germany, \texttt{mschmid@uni-trier.de}}

\maketitle

\begin{abstract}
In the present paper, we study the match test for extended regular expressions. We approach this NP-complete problem by introducing a novel variant of two-way multihead automata, which reveals that the complexity of the match test is determined by a hidden combinatorial property of extended regular expressions, and it shows that a restriction of the corresponding parameter leads to rich classes with a polynomial time match test. For presentational reasons, we use the concept of pattern languages in order to specify extended regular expressions. While this decision, formally, slightly narrows the scope of our results, an extension of our concepts and results to more general notions of extended regular expressions is straightforward.
\end{abstract}

\section{Introduction}\label{sec:intro}
Regular expressions are compact and convenient devices that are widely used to specify regular languages, e.\,g., when searching for a pattern in a string. In order to overcome their limited expressive power while, at the same time, preserving their desirable compactness, their definition has undergone various modifications and extensions in the past decades. These amendments have led to several competing definitions, which are collectively referred to as \emph{extended regular expressions} (or: \emph{REGEX} for short). Hence, today's text editors and programming languages (such as Java and Perl) use individual notions of (extended) regular expressions, and they all provide so-called \emph{REGEX engines} to conduct a \emph{match test}, i.\,e., to compute the solution to the membership problem for any language given by a REGEX and an arbitrary string. While the introduction of new features of extended regular expressions has frequently not been guided by theoretically sound analyses, recent studies have led to a deeper understanding of their properties (see, e.\,g., C\^ampeanu et al.~\cite{cam:afo}).\par
A common feature of extended regular expressions not to be found in the original definition is the option to postulate that each word covered by a specific REGEX must contain a variable substring at several recurrent positions (so-called \emph{backreferences}). Thus, they can be used to specify a variety of non-regular languages (such as the language of all words $w$ that satisfy $w = xx$ for arbitrary words $x$), and this has severe consequences on the complexity of their basic decision problems. In particular, their vital membership problem (i.\,e., in other words, the match test) is NP-complete (see Aho~\cite{aho:alg}). Although this matter is hardly discussed by the literature on the application of extended regular expressions (see, e.\,g., Friedl \cite{fri:mas}), many implementations of REGEX engines impose restrictions on the backreferences -- e.\,g., by limiting their number to $9$ -- in order to manage the trade-off between expressive power and time complexity. Recent developments of REGEX engines that are particularly tailored to efficiency even completly abandon the support of backreferences (see, e.\,g., Google's RE2 \cite{goo:re2} and Le Maout \cite{lem:reg}), so that they can make use of the well-developed theory of finite automata as acceptors of regular languages. On the other hand, the original introduction of backreferences has been motivated by practical needs, which implies that such radical solutions cannot be used in various applied settings. Furthermore, as demonstrated by Freydenberger \cite{fre:ext}, regular expressions with backreferences allow the specification of regular languages in a much more compact manner than their counterparts without backreferences. More precisely, the size difference between extended regular expressions and equivalent ``normal'' regular expressions is not bounded by any recursive function. Thus, users of regular expressions will inevitably wish to specify such languages via backreferences, and the match test for expressions with backreferences can be considerably faster than that for expressions that do not make use of this concept.
We therefore consider it a worthwhile task to investigate alternative approaches to the match test of REGEX with backreferences and to establish large classes of extended regular expressions that have a polynomial-time match test. Moreover, in order to support an integration with existing state-of-the-art REGEX engines that do not support backreferences, it is desirable that the corresponding concepts are based on appropriate automata.\par
It is the purpose of this paper to propose and study such an alternative method. In order to keep the technical details reasonably concise we do not directly use a particular REGEX definition, but we consider a well-established type of formal languages that, firstly, is defined in a similar yet simpler manner, secondly, is a proper subclass of the languages generated by extended regular expressions and, thirdly, shows the same properties with regard to the membership problem: the \emph{pattern languages} as introduced by Angluin~\cite{ang:fin2}; our results can then directly be transferred to extended regular expressions. In this context, a \emph{pattern} $\alpha$ is a finite string that consists of \emph{variables} and \emph{terminal symbols} (taken from a fixed alphabet $\Sigma$), and its language is the set of all words that can be derived from $\alpha$ when substituting arbitrary words over $\Sigma$ for the variables. For example, the language $L$ generated by the pattern $\alpha := x_1 \ta x_2 \tb x_1$ (where $x_1, x_2$ are variables and $\ta, \tb$ are terminal symbols) consists of all words with an arbitrary prefix $u$, followed by the letter $\ta$, an arbitrary word $v$, the letter $\tb$ and a suffix that equals the prefix $u$. Thus, $w_1 := \ta \ta \ta \tb \tb \ta \ta$ is contained in $L$, whereas $w_2 := \tb\ta\ta\tb\ta$ is not.\par
In the definition of pattern languages, the option of using several occurrences of a variable exactly corresponds to the backreferences in extended regular expressions, and therefore the membership problem for pattern languages captures the essence of what is computationally complex in the match test for REGEX. Thus, it is not surprising that the membership problem for pattern languages is also known to be NP-complete (see Angluin~\cite{ang:fin2} and Jiang et al.~\cite{jia:pat}). Furthermore, Ibarra et al.~\cite{iba:ano} point out that the membership problem for pattern languages is closely related to the solvability problem for certain Diophantine equations. More precisely, for any word $w$ and for any pattern $\alpha$ with $m$ terminal symbols and $n$ different variables, $w$ can only be contained in the language generated by $\alpha$ if there are numbers $s_i$ (representing the lengths of the substitution words for the variables $x_i$) such that $|w| = m + \sum_{i=1}^n a_i s_i$ (where $a_i$ is the number of occurrences of $x_i$ in $\alpha$ and $|w|$ stands for the \emph{length} of $w$). Thus, the membership test needs to implicitly solve this NP-complete problem, which is called \emph{Money-Changing} or \emph{Coin Problem} and -- due to its fundamentality and its practical relevance, e.\,g., in Operations Research -- has been intensively studied. All these insights into the complexity of the membership problem do not depend on the question of whether the pattern contains any terminal symbols. Therefore, we can safely restrict our considerations to so-called \emph{terminal-free} pattern languages (generated by patterns that consist of variables only); for this case, NP-completeness of the membership problem has indirectly been established by Ehrenfeucht and Rozenberg~\cite{ehr:fin}. This restriction again improves the accessibility of our technical concepts, without causing a loss of generality.\par
As stated above, these results on the complexity of the problem (and the fact that probabilistic solutions might often be deemed inappropriate for it) motivate the search for large subclasses with efficiently solvable membership problem and for suitable concepts realising the respective algorithms. Rather few such classes are known to date. They either restrict the number of \emph{different} variables in the patterns to a fixed number $k$ (see Angluin~\cite{ang:fin2}, Ibarra et al.~\cite{iba:ano}), which is an obvious option and leads to a time complexity of $\landau(n^k)$, or they restrict the number of \emph{occurrences} of each variable to $1$ (see Shinohara~\cite{shi:pol1}), which turns the resulting pattern languages into regular languages.
\par
In the present paper, motivated by Shinohara's~\cite{shi:pol2} \emph{non-cross} pattern languages, we introduce major classes of pattern languages (and, hence, of extended regular expressions) with a polynomial-time membership problem that do not show any of the above limitations. Thus, the corresponding patterns can have any number of variables with any number of occurrences; instead, we consider a rather subtle parameter, namely the \emph{distance} several occurrences of any variable $x$ may have in a pattern (i.\,e., the maximum number of different variables separating any two consecutive occurrences of $x$). We call this parameter the \emph{variable distance} $\vd$ of a pattern, and we demonstrate that, for the class of all patterns with $\vd \leq k$, the membership problem is solvable in time $\landau(n^{k + 4})$. Referring to the proximity between the subject of our work and the solvability problem of Diophantine equations (which does not depend on the order of variables in the patterns, but merely on their numbers of occurrences), we consider this insight quite remarkable, and it is only possible since the Money Changing Problem is \emph{weakly} NP-complete, i.\,e., it is only NP-complete since its input merely consists of numbers in binary representation, which means that the input length for the Money Changing Problem is exponentially smaller than for the membership problem for pattern languages, where we have to regard the lengths of the input strings as input length of the problem. We also wish to point out that, in terms of our concept, Shinohara's non-cross patterns correspond to those patterns with $\vd = 0$.\par
We prove our main result by introducing the concept of a \emph{Janus automaton}, which is a variant of a two-way two-head automaton (see Ibarra~\cite{iba:ont}), amended by the addition of a number of counters. Janus automata are algorithmic devices that are tailored to performing the match test for pattern languages, and we present a systematic way of constructing them. While an intuitive use of a Janus automaton assigns a distinct counter to each variable in the corresponding pattern $\alpha$, we show that in our advanced construction the number of different counters can be limited by the variable distance of $\alpha$. Since the number of counters is the main element determining the complexity of a Janus automaton, this yields our main result. An additional effect of the strictness of our approach is that we can easily discuss its quality in a formal manner, and we can show that, based on a natural assumption on how Janus automata operate, our method leads to an automaton with the smallest possible number of counters.\par
This paper is organised as follows. In Section~\ref{sec:definitions} the basic definitions and the concept of pattern languages are introduced. The purpose of Section~\ref{sec:janus} is to introduce our new model, the Janus automaton. In Section~\ref{sec:pattern} we show how we can effectively construct Janus automata to recognise pattern languages. Then, in Section~\ref{sec:varDist}, the above mentioned concept of the variable distance is introduced. Also in this section, we shall use this notion to present and prove our main result. Finally, we summarise this paper in Section~\ref{sec:conclusion}, and we give an overview of related and further research ideas.

\section{Basic Definitions}\label{sec:definitions}

Let $\bbN := \{0,1, 2, 3, \ldots \}$. For an arbitrary alphabet $A$, a \emph{string} (\emph{over $A$}) is a finite sequence of symbols from $A$, and $\varepsilon$ stands for the \emph{empty string}. The symbol $A^+$ denotes the set of all nonempty strings over $A$, and $A^*:=A^+ \cup \{ \varepsilon \}$. For the \emph{concatenation} of two strings $w_1, w_2$ we write $w_1 \cdot w_2$ or simply $w_1 w_2$. We say that a string $v \in A^*$ is a \emph{factor} of a string $w \in A^*$ if there are $u_1, u_2 \in A^*$ such that $w = u_1 \cdot v \cdot u_2$. The notation $|K|$ stands for the size of a set $K$ or the length of a string $K$; the term $|w|_a$ refers to the number of occurrences of the symbol $a$ in the string $w$. \par
For any alphabets $A, B$, a \emph{morphism} is a function $h: A^*\rightarrow B^*$ that satisfies $h(vw)=h(v)h(w)$ for all $v,w\in A^*$. Let $\Sigma$ be a (finite) alphabet of so-called \emph{terminal symbols} and $X$ an infinite set of \emph{variables} with $\Sigma \cap X = \emptyset$. We normally assume $X := \{ x_1, x_2, x_3, \ldots \}$. A \emph{pattern} is a nonempty string over $\Sigma \cup X$, a \emph{terminal-free pattern} is a nonempty string over $X$ and a \emph{word} is a string over $\Sigma$. For any pattern $\alpha$, we refer to the set of variables in $\alpha$ as $\var(\alpha)$. We shall often consider a terminal-free pattern in its variable factorisation, i.\,e.\ $\alpha = y_1 \cdot y_2 \cdot \ldots \cdot y_n$ with $y_i \in \{x_1, x_2, \ldots, x_{m}\}$, $1 \leq i \leq n$ and $m = |\var(\alpha)|$. \par
A morphism $\sigma:\left(\Sigma\cup X\right)^*\rightarrow\Sigma^*$ is called a \emph{substitution} if $\sigma(a)=a$ for every $a\in\Sigma$. We define the \emph{pattern language} of a terminal-free pattern $\alpha$ by $\PL(\alpha) := \{ \sigma(\alpha) \mid \sigma: X^* \to \Sigma^* \mbox{ is a substitution} \}$. Note, that these languages, technically, are terminal-free E-pattern languages (see Jiang et al.~\cite{jia:pat}). Since in our work the dependency on the alphabet $\Sigma$ is negligible, we mostly denote pattern languages by $L(\alpha)$. Furthermore, we ignore all patterns $\alpha$ satisfying, for a variable $x$, $|\alpha|_x = 1$, as then $\PL(\alpha) = \Sigma^*$.\par
The problem to decide for a given pattern $\alpha$ and a given word $w \in \Sigma^*$ whether $w \in L(\alpha)$ is called the \emph{membership problem}.\par
Finally, we assume the reader to be familiar with the basic concepts of automata theory and refer to Hopcroft et al. \cite{hop:int2} for terms not defined explicitly.

\section{Janus Automata}\label{sec:janus}

In order to prove the main results of this paper, we introduce a novel type of automata, the so-called Janus automata, that are tailored to solving the membership problem for pattern languages. We shall first explain this model in an informal way and then give a formal definition.\par
A Janus automaton is a two-way automaton with two input heads. In addition to that, a Janus automaton has a constant number of restricted counters. In every step of the computation the Janus automaton provides a distinct counter bound for every counter. The counter values can only be incremented or left unchanged, and they count strictly modulo their counter bound, i.\,e., once a counter value has reached its counter bound, a further increment forces the counter to start at counter value $0$ again. Depending on the current state, the currently scanned input symbols and on whether the counters have reached their bounds, the transition function determines the next state, the input head movements and the counter instructions, and this is done completely deterministically. In addition to the counter instructions of incrementing and leaving the counter unchanged, it is also possible to reset a counter. In this case, the counter value is set to $0$ and a new counter bound is nondeterministically guessed. Furthermore, we require the first input head to be always positioned to the left of the second input head, so there are a well-defined left and right head. This explains why we call this model a \emph{Janus} automaton.\par
Any string $\cent w \$$, where $w \in \Sigma^*$ and the symbols $\cent, \$$ (referred to as left and right endmarker, respectively) are not in $\Sigma$, is an input. Initially, the input tape stores some input $w$, the automaton is in the initial state, all counter bounds and counter values are $0$ and both input heads scan $\cent$.
The word $w$ is accepted by an automaton if and only if it is possible to reach an accepting state by successively applying the transition function. \par
Janus automata are nondeterministic, but their nondeterminism differs from that of common nondeterministic finite automata. The only nondeterministic step a Janus automaton is able to perform consists in guessing a new counter bound for some counter. Once a new counter bound is guessed, the previous one is lost. Apart from that, each transition, i.\,e., entering a new state, moving the input heads and giving instructions to the counters, is defined completely deterministically.\par
The vital point of a computation of a Janus automaton with $k$ counters is then that the automaton is only able to store exactly $k$ (a constant number, not depending on the input word) different numbers at a time (the counter bounds). We shall see that this number of counters is the crucial number for the complexity of the \emph{acceptance problem}, i.\,e., to decide, for a given word $w$, whether $w$ is in the language accepted by the automaton. \par
We are now ready to present a formal definition of Janus automata:

\begin{definition}
A \emph{Janus automaton with $k$ counters} (denoted by \emph{$\jfa(k)$} in the sequel) is a device $M := (k, Q, \Sigma, \delta, q_0, F)$, where $k \geq 0$ is the number of \emph{counters}, $Q$ is a finite nonempty set of \emph{states}, $\Sigma$ is a finite nonempty alphabet of \emph{input symbols}, $q_0 \in Q$ is the \emph{initial state}, $F \subseteq Q$ is the set of \emph{accepting states} and $\delta$ is a mapping $Q \times \Sigma^{2} \times \{\mathtt{t_{=}}, \mathtt{t_{<}}\}^{k} \rightarrow Q \times \{\mathtt{-1}, \mathtt{0}, \mathtt{1}\}^2 \times \{\mathtt{0}, \mathtt{1}, \mathtt{r}\}^{k}$. The mapping $\delta$ is called the \emph{transition function}.\par
An \emph{input} to $M$ is any string of the form $\cent w \$$, where $w \in \Sigma^*$ and the symbols $\cent, \$$ (referred to as \emph{left} and \emph{right endmarker}, respectively) are not in $\Sigma$. Let $\delta(p, a_1, a_2, s_1, \hdots, s_k) = (q, r_1, r_2, d_1, \hdots, d_k)$. For each $i \in \{1, 2\}$, we call the element $a_i$ the \emph{input symbol scanned by head $i$} and $r_i$ the \emph{instruction for head $i$}. For each $j \in \{1, 2, \ldots, k\}$, the element $s_j \in \{\mathtt{t_{=}}, \mathtt{t_{<}}\}$ is the \emph{counter message of counter $j$}, and $d_j$ is called the \emph{counter instruction for counter $j$}.
\end{definition}
The transition function $\delta$ of a $\jfa(k)$ determines whether the input heads are moved to the left ($r_i = \mathtt{-1}$), to the right ($r_i = \mathtt{1}$) or left unchanged ($r_i = \mathtt{0}$), and whether the counters are incremented ($d_j = \mathtt{1}$), left unchanged ($d_j = \mathtt{0}$) or reset ($d_j = \mathtt{r}$). Next, in order to define the language accepted by a Janus automaton, we need to introduce the concept of a $\jfa(k)$ computation.
\begin{definition} \label{computationDefinition}
Let $M := (k, Q, \Sigma, \delta, q_0, F)$ be a $\jfa(k)$ and $w := b_1 \cdot b_2 \cdot \ldots \cdot b_{n}$, $b_i \in \Sigma$, $1 \leq i \leq n$. A \emph{configuration of $M$ (on input $\cent w\$$)} is an element of the set
\begin{align*}
\widehat{C}_M := \{(q, h_1, h_2, (c_1, C_1), \hdots, (c_k, C_k))~|~&q \in Q, 0 \leq h_1 \leq h_2 \leq n + 1,\\
&0 \leq c_i \leq C_i \leq n, 1 \leq i \leq k\}\enspace.
\end{align*}
The pair $(c_i, C_i)$, $1 \leq i \leq k$, describes the current configuration of the i$^{th}$ counter, where $c_i$ is the \emph{counter value} and $C_i$ the \emph{counter bound}. The element $h_i$, $i \in \{1, 2\}$, is called the \emph{head position of head $i$}.\par
An \emph{atomic move of $M$ (on input $\cent w\$$)} is denoted by the relation $\vdash_{M, w}$ over the set of configurations. Let $\delta(p, a_1, a_2, s_1, \hdots, s_k) = (q, r_1, r_2, d_1, \hdots, d_k)$. Then, for all $c_i, C_i$, $1 \leq i \leq k$, where $c_i < C_i$ if $s_i = \mathtt{t_{<}}$ and $c_i = C_i$ if $s_i = \mathtt{t_{=}}$, and for all $h_1$, $h_2$, $0 \leq h_1 \leq h_2 \leq n + 1$, with $b_{h_i} = a_i$, $i \in \{1, 2\}$, we define $(p, h_1, h_2, (c_1, C_1), \hdots, (c_k, C_k))$ $\vdash_{M, w}$ $(q, h'_1, h'_2, (c'_1, C'_1), \hdots, (c'_k, C'_k))$. Here, the elements $h'_i$, $i \in \{1, 2\}$, and $c'_j, C'_j$, $1 \leq j \leq k$, are defined as follows:
\begin{align*}
h'_i:=&\begin{cases}
h_i + r_i& \mbox{if~$0 \leq h_1 + r_1 \leq h_2 + r_2 \leq n + 1$}\enspace,\\
h_i& \mbox{else}\enspace.
\end{cases}
\end{align*}
For each $j \in \{1, \hdots, k\}$, if $d_j = \mathtt{r}$, then $c'_j := 0$ and, for some $m \in \{0, 1, \ldots, n\}$, $C'_j := m$. If, on the other hand, $d_j \neq \mathtt{r}$, then $C'_j := C_j$ and 
\begin{equation*}
c'_j := c_j + d_j \mod (C_j + 1)\,.
\end{equation*}
To describe a \emph{sequence of (atomic) moves of $M$ (on input $w$)} we use the reflexive and transitive closure of the relation $\vdash_{M, w}$, denoted by $\vdash^*_{M, w}$.
$M$ accepts the word $w$ if and only if $\widehat{c}_0 \vdash^*_{M, w} \widehat{c}_f$, where $\widehat{c}_0 := (q_0, 0, 0, (0, 0), \hdots$, $(0, 0))$ is the \emph{initial configuration}, and $\widehat{c}_f := (q_f, h_1, h_2, (c_1, C_1), \hdots$ $(c_k, C_k))$ is a \emph{final configuration}, for some $q_f \in F$, $0 \leq h_1 \leq h_2 \leq n + 1$ and $0 \leq c_i \leq C_i \leq n$, $1 \leq j \leq k$. For any Janus automaton $M$, let $L(M)$ denote the set of words accepted by $M$.
\end{definition}
By definition, the movements of the input heads are bounded by the left and right endmarkers and the first head is always positioned to the left of the second head. The two counter messages denote whether the counter value is still less than the counter bound ($\mathtt{t_{<}}$) or equals the counter bound ($\mathtt{t_{=}}$). If $\mathtt{r}$ is used in order to reset a counter, a new counter bound is nondeterministically guessed and the counter value is set to $0$.\par
In our applications of this automata model, we use the counters in a particular but natural way. Let us assume that $n$ is the counter bound of a certain counter with counter value $0$. We can define the transition function in a way such that an input head is successively moved to the right and, in every step, the counter is incremented. As soon as the counter reaches its counter bound (i.\,e., its counter message changes from $\mathtt{t_{<}}$ to $\mathtt{t_{=}}$) we stop that procedure and can be sure that the input head has been moved exactly $n$ steps. In this way an automaton can scan whole factors of the input, induced by counter bounds. Furthermore, as we have two input heads, we can use the counter with bound $n$ to move them simultaneously to the right, checking symbol by symbol whether two factors of equal length are the same. It is also worth mentioning that we can use counters in the same way to move input heads from right to left instead of from left to right.\par
This way of using counters shall be made clear by sketching how a Janus automaton $M$ could be defined that recognises the language 
\begin{equation*}
L := \{u \cdot a \cdot v \cdot b \cdot v \cdot u \mid u, v \in \{a, b\}^*\}\,.
\end{equation*}
The Janus automaton $M$ uses two counters and applies the following strategy to check whether an input word $w$ is in $L$. First, we reset both counters and therefore guess two new counter bounds $C_1$ and $C_2$. Then we check if $w = u \cdot a \cdot v \cdot b \cdot v \cdot u$ with $|u| = C_1$ and $|v| = C_2$. This is done by using the first counter to move the right head from position $1$ (the symbol next to the left endmarker) to the right until it reaches position $C_1 + 1$. Then it is checked whether $a$ occurs at this position. After that, by using the second counter, the right head is moved further to the right to position $C_1 + C_2 + 2$, where $M$ checks for the occurrence of the symbol $b$. Next, again by using the second counter, the right head is moved another $C_2 + 1$ steps to the right in order to place it exactly where we expect the second occurrence of factor $u$ to begin. Now, both input heads are moved simultaneously to the right for $C_1$ steps, checking in each step whether they scan the same symbol and whether after these $C_1$ steps the right head scans exactly the right endmarker. If this is successful, we know that $w$ is of form $u \cdot a \cdot v \cdot b \cdot v' \cdot u$, with $|u| = C_1$ and $|v| = |v'| = C_2$. Hence, it only remains to check whether or not $v = v'$. This can be done by positioning both heads at the first positions of the factors $v$ and $v'$, i.\,e., moving the left head one step to the right and the right head $C_1 + C_2$ steps back to the left. In order to perform this, as well as the final matching of the factors $v$ and $v'$, $M$ can apply its counters in the same way as before. If this whole procedure is successful, $M$ shall enter an accepting state, and reject its input otherwise.\par
It is obvious that $w \in L$ if and only if there is a possibility to guess counter bounds such that $M$ accepts $w$; thus, $L(M) = L$.

\section{Janus Automata for Pattern Languages}\label{sec:pattern}

In this section, we demonstrate how Janus automata can be used for recognising pattern languages. More precisely, for an arbitrary terminal-free pattern $\alpha$, we construct a $\jfa(k)$ $M$ satisfying $L(M) = L(\alpha)$. Before we move on to a formal analysis of this task, we discuss the problem of deciding whether $w \in L(\alpha)$ for given $\alpha$ and $w$, i.\,e., the membership problem, in an informal way.\par
Let $\alpha = y_1 \cdot y_2 \cdot \ldots \cdot y_n$ be a terminal-free pattern with $m := |\var(\alpha)|$, and let $w \in \Sigma^*$ be a word. The word $w$ is an element of $L(\alpha)$ if and only if there exists a factorisation $w = u_1 \cdot u_2 \cdot \ldots \cdot u_n$ such that $u_j = u_{j'}$ for all $j, j'$, $1 \leq j < j' \leq |\alpha|$, with $y_j = y_{j'}$. We call such a factorisation $w = u_1 \cdot u_2 \cdot \ldots \cdot u_n$ a \emph{characteristic factorisation for $w \in L(\alpha)$} (or simply \emph{characteristic factorisation} if $w$ and $\alpha$ are obvious from the context). Thus, a way to solve the membership problem is to initially guess $m$ numbers $l_1, l_2, \ldots, l_{m}$, then, if possible, to factorise $w = u_1 \cdot \ldots \cdot u_n$ such that $|u_j| = l_i$ for all $j$ with $y_j = x_i$ and, finally, to check whether this is a characteristic factorisation for $w \in L(\alpha)$. A $\jfa(m)$ can perform this task by initially guessing $m$ counter bounds, which can be interpreted as the lengths of the factors. The two input heads can be used to check if this factorisation has the above described properties.
However, the number of counters that are then required directly depends on the number of variables, and the question arises if this is always necessary.\par
In the next definitions, we shall establish the concepts that formalise and generalise the way of checking whether or not a factorisation is a characteristic one.
\begin{definition}\label{matchingOrderDef}
Let $\alpha := y_1 \cdot y_2 \cdot \ldots \cdot y_n$ be a terminal-free pattern, and, for each $x_i \in \var(\alpha)$, let $n_i := |\alpha|_{x_i}$. The set \emph{$\varpos{i}(\alpha)$} is the set of all positions $j$ satisfying $y_j = x_i$. The sequence $((l_1, r_1), (l_2, r_2), \ldots, (l_{n_i - 1}, r_{n_i - 1}))$ with $(l_j, r_j) \in \varpos{i}(\alpha)^2$ and $l_j < r_j$, $1 \leq j \leq n_i - 1$, is a \emph{matching order for $x_i$ in $\alpha$} if and only if the graph $(\varpos{i}(\alpha), \{(l_1, r_1), (l_2, r_2), \ldots, (l_{n_i - 1}, r_{n_i - 1})\})$ is connected. 
\end{definition}

We consider an example in order to illustrate Definition~\ref{matchingOrderDef}. If, for some pattern $\alpha$ and some $x_i \in \var(\alpha)$, $\varpos{i}(\alpha) := \{1, 3, 5, 9, 14\}$, then the sequences $((5, 1)$, $(14, 3)$, $(1, 3)$, $(9, 3))$, $((1, 3)$, $(3, 5)$, $(5, 9)$, $(9, 14))$ and $((5, 1)$, $(5, 3)$, $(5, 9)$, $(5, 14))$ are some of the possible matching orders for $x_i$ in $\alpha$, whereas the sequences $((1, 3), (9, 1), (3, 9), (5, 14))$ and $((1, 3), (3, 5), (5, 9), (9, 1))$ do not satisfy the conditions to be matching orders for $x_i$ in $\alpha$.\par
To obtain a matching order for a whole pattern $\alpha$ we simply combine matching orders for all $x \in \var(\alpha)$:
\begin{definition}\label{completeMatchingOrderDefinition}
Let $\alpha$ be a terminal-free pattern with $m := |\var(\alpha)|$ and, for all $i$ with $1 \leq i \leq m$, $n_i := |\alpha|_{x_i}$ and let $(m_{i, 1},m_{i, 2}, \ldots, m_{i, n_i - 1})$ be a matching order for $x_i$ in $\alpha$. The tuple $(m_1, m_2, \ldots, m_k)$ is a \emph{complete matching order for $\alpha$} if and only if $k = \sum_{i = 1}^{m} (n_i - 1)$ and, for all $i, j_i$, $1 \leq i \leq m$, $1 \leq j_i \leq n_i - 1$, there is a $j'$, $1 \leq j' \leq k$, with $m_{j'} = m_{i, j_i}$. The elements $m_j \in \varpos{i}(\alpha)^2$ of a matching order $(m_1, m_2, \ldots, m_k)$ are called \emph{matching positions}.
\end{definition}

We introduce an example pattern
\begin{equation*}
\beta := x_1 \cdot x_2 \cdot x_1 \cdot x_2 \cdot x_3 \cdot x_2 \cdot x_3\,,
\end{equation*}
which we shall use throughout the whole paper in order to illustrate the main definitions. Regarding Definition~\ref{completeMatchingOrderDefinition}, we observe that all possible sequences of the matching positions in $\{(1,3)$, $(2,4)$, $(4,6)$, $(5,7)\}$ are some of the possible complete matching orders for $\beta$. As pointed out by the following lemma, the concept of a complete matching order can be used to check whether a factorisation is a characteristic one.
\begin{lemma}\label{matchingOrderLemma}
Let $\alpha = y_1 \cdot y_2 \cdot \ldots \cdot y_n$ be a terminal-free pattern and let $((l_1, r_1)$, $(l_2, r_2)$, $\ldots, (l_{k}, r_{k}))$ be a complete matching order for $\alpha$. Let $w$ be an arbitrary word in some factorisation $w = u_1 \cdot u_2 \cdot \ldots \cdot u_n$. If $u_{l_j} = u_{r_j}$ for every $j$, $1 \leq j \leq k$, then $w = u_1 \cdot u_2 \cdot \ldots \cdot u_n$ is a characteristic factorisation.
\end{lemma}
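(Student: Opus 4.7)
The plan is to unpack the definition of a characteristic factorisation and reduce the claim to a routine connectivity argument on the graph that is implicit in Definition~\ref{matchingOrderDef}. A characteristic factorisation only requires that $u_p = u_{p'}$ whenever $y_p = y_{p'}$, so I would fix an arbitrary variable $x_i \in \var(\alpha)$ and argue that all factors $u_p$ with $p \in \varpos{i}(\alpha)$ are equal. This immediately reduces the lemma to a per-variable statement.

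For each fixed $x_i$, the complete matching order restricts (by Definition~\ref{completeMatchingOrderDefinition}) to a matching order $((l_{i,1},r_{i,1}), \ldots, (l_{i,n_i-1}, r_{i,n_i-1}))$ for $x_i$ in $\alpha$, and by Definition~\ref{matchingOrderDef} the graph $G_i := (\varpos{i}(\alpha),E_i)$ with $E_i := \{(l_{i,1},r_{i,1}), \ldots, (l_{i,n_i-1}, r_{i,n_i-1})\}$ is connected. The hypothesis of the lemma guarantees $u_l = u_r$ for every edge $(l,r) \in E_i$. Hence, for any two positions $p, p' \in \varpos{i}(\alpha)$, I would pick a path $p = p_0, p_1, \ldots, p_t = p'$ in $G_i$ and iterate the edge-equalities $u_{p_0} = u_{p_1} = \cdots = u_{p_t}$, using transitivity of equality of strings. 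This shows $u_p = u_{p'}$ for all $p,p' \in \varpos{i}(\alpha)$.

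Applying this argument to every $x_i \in \var(\alpha)$, I obtain $u_j = u_{j'}$ whenever $y_j = y_{j'}$, which is exactly the defining property of a characteristic factorisation for $w \in L(\alpha)$. There is no genuine obstacle here: the only subtlety is being careful that the edges in the complete matching order, when projected onto the positions of a single variable $x_i$, really do coincide with the matching order for $x_i$ (so that connectivity can be invoked). This follows directly from Definition~\ref{completeMatchingOrderDefinition}, which requires that every edge from the per-variable matching orders appears among the $(l_j,r_j)$. The whole proof should therefore be short and consist essentially of this connectivity-plus-transitivity observation.
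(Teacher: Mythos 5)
Your proposal is correct and follows exactly the same route as the paper's proof: reduce to a per-variable claim, invoke connectedness of the graph underlying the matching order for each variable, and propagate the factor equalities along paths by transitivity. No gaps.
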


\begin{proof}
Let $x_i \in \var(\alpha)$ be arbitrarily chosen and let the sequence $((l'_1, r'_1), (l'_2, r'_2),$ $\ldots, (l'_{k\rq{}}, r'_{k\rq{}}))$ be an arbitrary matching order for $x_i$ in $\alpha$. Assume that $u_{l'_j} = u_{r'_j}$ for all $j$, $1 \leq j \leq k\rq{}$. As $(\varpos{i}(\alpha), \{(l'_1, r'_1), (l'_2, r'_2), \ldots, (l'_{k\rq{}}, r'_{k\rq{}})\})$ is a connected graph and as the equality of words is clearly a transitive relation, we can conclude that $u_j = u_{j'}$ for all $j, j'$, $1 \leq j < j' \leq |\alpha|$, with $y_j = y_{j'} = x_i$. Applying this argumentation to all variables in $\alpha$ implies the statement of Lemma~\ref{matchingOrderLemma}. 
\end{proof}

With respect to the complete matching order $((4,6), (1,3), (2,4), (5,7))$ for the example pattern $\beta$, we apply Lemma~\ref{matchingOrderLemma} in the following way. If $w$ can be factorised into $w = u_1 \cdot u_2 \cdot \ldots \cdot u_7$ such that $u_4 = u_6$, $u_1 = u_3$, $u_2 = u_4$ and $u_5 = u_7$, then $w \in L(\beta)$.\par
Let $(l_1, r_1)$ and $(l_2, r_2)$ be two consecutive matching positions of a complete matching order. It is possible to perform the comparison of factors $u_{l_1}$ and $u_{r_1}$ by positioning the left head on the first symbol of $u_{l_1}$, the right head on the first symbol of $u_{r_1}$ and then moving them simultaneously over these factors from left to right, checking symbol by symbol if these factors are identical (cf. the example Janus automaton in Section~\ref{sec:janus}). After that, the left head, located at the first symbol of factor $u_{l_1 + 1}$, has to be moved to the first symbol of factor $u_{l_2}$. If $l_1 < l_2$, then it is sufficient to move it over all the factors $u_{l_1 + 1}, u_{l_1 + 2}, \ldots, u_{l_2 - 1}$. If, on the other hand, $l_2 < l_1$, then the left head has to be moved to the left, and, thus, over the factors $u_{l_1}$ and $u_{l_2}$ as well. Furthermore, as we want to apply these ideas to Janus automata, the heads must be moved in a way that the left head is always located to the left of the right head. The following definition shall formalise these ideas. \par

\begin{definition}\label{janusOperatingModeDefinition}
In the following definition, let $\lambda$ and $\rho$ be constant markers. For all $j, j' \in \mathbb{N}$ with $j < j'$, we define a mapping $g$ by $g(j, j') := (j + 1, j + 2, \ldots, j' - 1)$ and $g(j', j) := (j', j' - 1, \ldots, j)$.\par
Let $((l_1, r_1), (l_2, r_2), \ldots, (l_k, r_k))$ be a complete matching order for a terminal-free pattern $\alpha$ and let $l_0 := r_0 := 0$. For every matching position $(l_i, r_i)$, $1 \leq i \leq k$, we define a sequence $D^{\lambda}_i$ and a sequence $D^{\rho}_i$ by 
\begin{align*}
D^{\lambda}_i &:= ((p_1, \lambda), (p_2, \lambda), \ldots, (p_{k_1}, \lambda)) \text{ and}\\
D^{\rho}_i &:= ((p'_1, \rho), (p'_2, \rho), \ldots, (p'_{k_2}, \rho))\,,
\end{align*}
where $(p_1, p_2, \ldots, p_{k_1}) := g(l_{i - 1}, l_{i})$, $(p'_1, p'_2, \ldots, p'_{k_2}) := g(r_{i - 1}, r_{i})$.\par
Now let $D'_i := ((s_1, \mu_1), (s_2, \mu_2), \ldots, (s_{k_1 + k_2}, \mu_{k_1 + k_2}))$ be a tuple satisfying the following two conditions. Firstly, it contains exactly the elements of $D^{\lambda}_i$ and $D^{\rho}_i$ such that the relative orders of the elements in $D^{\lambda}_i$ and $D^{\rho}_i$ are preserved. Secondly, for every $j$, $1 \leq j \leq k_1 + k_2$, $s_{j_l} \leq s_{j_r}$ needs to be satisfied, with $j_l = \max(\{j' \mid 1 \leq j' \leq j, \mu_{j'} = \lambda\} \cup \{j'_l\})$ and $j_r = \max(\{j' \mid 1 \leq j' \leq j, \mu_{j'} = \rho\} \cup \{j'_r\})$, where $(s_{j'_l}, \mu_{j'_l})$ and $(s_{j'_r}, \mu_{j'_r})$ are the leftmost elements of $D'_i$ with $\mu_{j'_l} = \lambda$ and $\mu_{j'_r} = \rho$, respectively.\par
Now we append the two elements $(r_{i}, \rho)$, $(l_{i}, \lambda)$ in exactly this order to the end of $D'_i$ and obtain $D_i$. Finally, the tuple $(D_1, D_2, \ldots, D_k)$ is called a \emph{Janus operating mode for $\alpha$ (derived from the complete matching order $((l_1, r_1), (l_2, r_2), \ldots, (l_k, r_k))$)}.
\end{definition}

We once again consider the example $\beta = x_1 \cdot x_2 \cdot x_1 \cdot x_2 \cdot x_3 \cdot x_2 \cdot x_3$. According to Definition~\ref{janusOperatingModeDefinition} we consider the tuples $D^{\lambda}_i$ and $D^{\rho}_i$ with respect to the complete matching order $((4,6), (1,3), (2,4), (5,7))$ for $\beta$. We omit the markers $\lambda$ and $\rho$ for a better presentation. 
The tuples $D^{\lambda}_i$ and $D^{\rho}_i$, $1 \leq i \leq 4$, are given by 
\begin{align*}
D^{\lambda}_1 &= (1, 2, 3)\,, &D^{\rho}_1 &= (1, 2, \ldots, 5)\,,\\
D^{\lambda}_2 &= (4, 3, 2, 1)\,, &D^{\rho}_2 &= (6, 5, 4, 3)\,,\\
D^{\lambda}_3 &= ()\,, &D^{\rho}_3 &= ()\,,\\
D^{\lambda}_4 &= (3, 4)\,, &D^{\rho}_4 &= (5, 6)\,.
\end{align*}
Therefore, $\Delta_{\beta} := (D_1, D_2, D_3, D_4)$ is a possible Janus operating mode for $\beta$ derived from $((4,6)$, $(1,3)$, $(2,4)$, $(5,7))$, where 
\begin{align*}
D_1 &= ((1, \rho), (1, \lambda), (2, \rho), (2, \lambda), (3, \rho), (3, \lambda), (4, \rho), (5, \rho), (6, \rho), (4, \lambda)),\\
D_2 &= ((4, \lambda), (3, \lambda), \ldots, (1, \lambda), (6, \rho), (5, \rho), \ldots, (3, \rho), (3, \rho), (1, \lambda)),\\
D_3 &= ((4, \rho), (2, \lambda)),\\
D_4 &= ((3, \lambda), (5, \rho), (4, \lambda), (6, \rho), (7, \rho), (5, \lambda)).
\end{align*}
Intuitively, we interpreted a complete matching order as a list of instructions specifying how to check whether a factorisation is a characteristic one. Similarly, a Janus operating mode derived from a complete matching order can be seen as an extension of this complete matching order that also contains information of how two input heads have to be moved from one matching position to the next one. Hence, there is an immediate connection between Janus operating modes and Janus automata for terminal-free pattern languages, and we shall see that it is possible to transform a Janus operating mode for any pattern directly into a Janus automaton recognising the corresponding pattern language. As we are particularly interested in the number of counters a Janus automaton needs, we introduce an instrument to determine the quality of Janus operating modes with respect to the number of counters that are required to actually construct a Janus automaton.

\begin{definition}\label{counterNumber}
Let $\Delta_{\alpha} := (D_1, D_2, \ldots, D_k)$ be a Janus operating mode for a terminal-free pattern $\alpha := y_1 \cdot y_2 \cdot \ldots \cdot y_n$. The \emph{head movement indicator} of $\Delta_{\alpha}$ is the tuple $\overline{\Delta_{\alpha}} = ((d'_1, \mu'_1)$, $(d'_2, \mu'_2)$, $\ldots, (d'_{k'}, \mu'_{k'}))$ with $k' = \sum_{i = 1}^{k} |D_i|$ that is obtained by concatenating all tuples $D_j$, $1 \leq j \leq k$, in the order given by the Janus operating mode. For every $i$, $1 \leq i \leq k'$, let 
\begin{equation*}
s_i := |\{x~|~\exists~j, j' \mbox{ with } 1 \leq j < i < j' \leq k', y_{d'_j} = y_{d'_{j'}} = x \neq y_{d'_{i}}\}|\,.
\end{equation*}
Then the \emph{counter number of $\Delta_{\alpha}$} (or \emph{$\cn(\Delta_{\alpha})$} for short) is $\max\{s_i~|~1 \leq i \leq k'\}$.
\end{definition}

We explain the previous definition in an informal manner. Apart from the markers $\lambda$ and $\rho$, the head movement indicator $\overline{\Delta_{\alpha}}$, where $\Delta_{\alpha}$ is a Janus operating mode for some $\alpha$, can be regarded as a sequence $(d'_1, d'_2, \ldots, d'_{k'})$, where the $d'_i$, $1 \leq i \leq k'$, are positions in $\alpha$. Hence, we can associate a pattern $D_{\alpha} := y_{d'_1} \cdot y_{d'_2} \cdot \ldots \cdot y_{d'_{k'}}$ with $\overline{\Delta_{\alpha}}$. In order to determine the counter number of $\Delta_{\alpha}$, we consider each position $i$, $1 \leq i \leq k'$, in $D_{\alpha}$ and count the number of variables different from $y_{d'_{i}}$ that are parenthesising position $i$ in $D_{\alpha}$. The counter number is then the maximum over all these numbers.\par
With regard to our example $\beta$, it can be easily verified that $\cn(\Delta_{\beta}) = 2$. We shall now see that, for every Janus operating mode $\Delta_{\alpha}$ for a pattern $\alpha$, we can construct a Janus automaton recognising $L(\alpha)$ with exactly $\cn(\Delta_{\alpha}) + 1$ counters:

\begin{theorem}\label{janusConstructionTheorem}
Let $\alpha$ be a terminal-free pattern and let $\Delta_{\alpha}$ be an arbitrary Janus operating mode for $\alpha$. There exists a $\jfa(\cn(\Delta_{\alpha}) + 1)$ $M$ satisfying $L(M) = L(\alpha)$.
\end{theorem}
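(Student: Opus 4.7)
The plan is to translate the Janus operating mode $\Delta_{\alpha} = (D_1, \ldots, D_k)$ directly into the finite control of a $\jfa(\cn(\Delta_{\alpha})+1)$. Concretely, $M$ walks through the head movement indicator $\overline{\Delta_{\alpha}} = ((d'_1, \mu'_1), \ldots, (d'_{k'}, \mu'_{k'}))$ one entry at a time, and for each entry $(d'_i, \mu'_i)$ it advances exactly one input head---the left head if $\mu'_i = \lambda$ and the right head if $\mu'_i = \rho$---by a number of symbols equal to the length of the substitution chosen for $y_{d'_i}$, with the direction of movement dictated by the ordering of the positions in the $D^{\lambda}_j$ or $D^{\rho}_j$ that contains this entry. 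The two final entries of each $D_i$, namely $(r_i, \rho)$ and $(l_i, \lambda)$, are processed jointly: both heads are moved rightward in lockstep across the factors $u_{l_i}$ and $u_{r_i}$, with a symbol-by-symbol equality check at every step. By the interleaving constraint in Definition~\ref{janusOperatingModeDefinition}, at no point during a $D_i$ does the right head threaten to overtake the left one, so every intermediate configuration is legitimate.

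The key ingredient is the allocation of the $\cn(\Delta_{\alpha})+1$ counters to variables. A physical counter is bound to a variable $x \in \var(\alpha)$ the first time an entry $(d'_i, \mu'_i)$ with $y_{d'_i} = x$ is processed: $M$ performs a reset instruction, which nondeterministically guesses a bound $C_x = |\sigma(x)|$ for some hypothetical substitution $\sigma$. Thereafter this counter serves as the authoritative record of $|\sigma(x)|$ and drives every subsequent head movement across an occurrence of $x$, each of which uses the ``increment until $\mathtt{t_{=}}$'' gadget sketched in Section~\ref{sec:janus}, augmented with a one-step priming increment when the counter already sits on $\mathtt{t_{=}}$ so that it cycles back to $0$ without losing its bound. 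Once the last entry for $x$ in $\overline{\Delta_{\alpha}}$ has been processed, the corresponding physical counter is declared free and is available for later reallocation. The crucial observation is that at step $i$ the variables simultaneously holding a counter are exactly $y_{d'_i}$ together with those $x \neq y_{d'_i}$ that parenthesise position $i$ in the sense of Definition~\ref{counterNumber}, giving $s_i + 1 \leq \cn(\Delta_{\alpha})+1$ live counters at any time.

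Correctness follows via Lemma~\ref{matchingOrderLemma}. An accepting computation of $M$ on $\cent w \$$ fixes, through its counter guesses, lengths $C_{x}$, one per variable; the parallel-movement phase associated with each $D_i$ succeeds exactly when the factor of length $C_{y_{l_i}}$ starting at the current left-head position agrees symbol by symbol with the factor of length $C_{y_{r_i}} = C_{y_{l_i}}$ starting at the current right-head position. These equalities cover every matching position $(l_i, r_i)$ of the underlying complete matching order, so Lemma~\ref{matchingOrderLemma} yields a characteristic factorisation of $w$, i.e., $w \in L(\alpha)$. Conversely, any characteristic factorisation of $w \in L(\alpha)$ supplies consistent counter bounds on which $M$ accepts. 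A final check built into the accepting states verifies that, after processing $D_k$, the right head rests on $\$$ and both heads have together swept across the full input.

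The chief obstacle is not a deep combinatorial insight but the careful engineering of the finite control. One has to maintain, as part of the state, a partial bijection between the $\cn(\Delta_{\alpha})+1$ physical counters and the currently live variables, perform allocation on first use and release on last use, and interleave the counter-cycling primitive with the simultaneous two-head movement required by each matching-position check. Since $\alpha$ is fixed when $M$ is built, the bookkeeping stays within a finite state set; however, one must verify that the reuse protocol never forces a fresh reset while the variable's length is still needed, and that the priming step for a counter currently at $\mathtt{t_{=}}$ can always be executed without illegally shifting an input head.
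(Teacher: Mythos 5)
Your proposal is correct and follows essentially the same construction as the paper: simulate the head movement indicator in the finite control, drive each head movement by a counter assigned to the relevant variable, check matching positions by lockstep movement, reuse counters once a variable's last occurrence in $\overline{\Delta_{\alpha}}$ has been processed, and conclude via Lemma~\ref{matchingOrderLemma}. The only cosmetic difference is that you justify the bound of $\cn(\Delta_{\alpha})+1$ counters by the greedy first-use/last-use allocation argument (correctly observing that the live variables at step $i$ are exactly $y_{d'_i}$ plus the $s_i$ parenthesising variables), whereas the paper phrases this as the existence of a static mapping $\co$ proved via Lemma~\ref{crossingLemma} — but the paper itself constructs $\co$ by precisely your greedy method in the proof of Theorem~\ref{mainComplexityResult}, so the two are interchangeable.
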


Before we can prove this result, we need the following technical lemma:

\begin{lemma} \label{crossingLemma}
Let $\alpha$ be a terminal-free pattern with $|\var(\alpha)| \geq 2$, and let $\Gamma := \{z_1, z_2, \ldots, z_m\} \subseteq \var(\alpha)$. The following statements are equivalent:
\renewcommand*\theenumi{\alph{enumi}}
\begin{enumerate}
\item \label{a_stat} For all $z, z' \in \Gamma$, $z \neq z'$, the pattern $\alpha$ can be factorised into $\alpha = \beta \cdot z \cdot \gamma \cdot z' \cdot \gamma' \cdot z \cdot \delta$ or $\alpha = \beta \cdot z' \cdot \gamma \cdot z \cdot \gamma' \cdot z' \cdot \delta$.
\item \label{b_stat} There exists a $z \in \Gamma$ such that $\alpha$ can be factorised into $\alpha = \beta \cdot z \cdot \gamma$ with $(\Gamma \slash \{z\}) \subseteq (\var(\beta) \cap \var(\gamma))$.
\end{enumerate}
\end{lemma}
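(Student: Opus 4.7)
My plan is to reinterpret condition (a) as a statement about the positions of first and last occurrences of the variables. For each $z \in \Gamma$, let $f(z)$ and $l(z)$ denote the positions of the first and last occurrence of $z$ in $\alpha$. The first step is to observe that (a) is equivalent to the pairwise overlap of the closed intervals $\{[f(z),l(z)] \mid z \in \Gamma\}$. One direction is immediate: a factorization $\alpha = \beta \cdot z \cdot \gamma \cdot z' \cdot \gamma' \cdot z \cdot \delta$ places an occurrence of $z'$ strictly between two occurrences of $z$, which forces $[f(z),l(z)] \cap [f(z'),l(z')] \neq \emptyset$. Conversely, if the two spans overlap then, without loss of generality, $f(z) < f(z')$ (equality is impossible, since distinct variables cannot share a position), and $f(z') \leq l(z)$ together with $f(z') \neq l(z)$ yields $f(z) < f(z') < l(z)$; hence there are occurrences of $z$ strictly on both sides of an occurrence of $z'$, producing the first factorization of (a).

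Since the spans are intervals on a line, pairwise overlap is equivalent to the existence of a common point, i.\,e., to $A := \max_{z \in \Gamma} f(z) \leq \min_{z \in \Gamma} l(z)$. To derive (b) from (a), I choose $z^{\ast} \in \Gamma$ with $f(z^{\ast}) = A$, set $p := f(z^{\ast})$, and factorize $\alpha = \beta \cdot z^{\ast} \cdot \gamma$ with $|\beta| = p - 1$. For any $z \in \Gamma \setminus \{z^{\ast}\}$, the inequality $f(z) \leq A = p$ combined with the fact that the symbol at position $p$ is $z^{\ast} \neq z$ gives $f(z) < p$, so $z \in \var(\beta)$; symmetrically, $l(z) \geq A = p$ and the symbol at $p$ being different from $z$ give $l(z) > p$, so $z \in \var(\gamma)$. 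Hence $\Gamma \setminus \{z^{\ast}\} \subseteq \var(\beta) \cap \var(\gamma)$, which is (b).

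For the converse (b) $\Rightarrow$ (a), let $\alpha = \beta \cdot z^{\ast} \cdot \gamma$ be the factorization provided by (b), and let $z_1, z_2 \in \Gamma$ be distinct. If one of them equals $z^{\ast}$, say $z_1 = z^{\ast}$, then $z_2 \in \Gamma \setminus \{z^{\ast}\}$ has occurrences in both $\beta$ and $\gamma$, producing the second form in (a) with $z = z_1, z' = z_2$. Otherwise, both $z_1$ and $z_2$ occur in $\beta$ and in $\gamma$; assuming without loss of generality that the first occurrence of $z_1$ in $\beta$ precedes the first occurrence of $z_2$ in $\beta$, the additional occurrence of $z_1$ in $\gamma$ sandwiches this occurrence of $z_2$ between two occurrences of $z_1$, giving the first form in (a).

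I expect the main obstacle to be the initial reformulation. Condition (a) looks asymmetric for each pair, since it offers one of two distinct factorizations, yet the disjunction of these two options turns out to be precisely the symmetric condition that the two spans overlap. Once this equivalence is in hand, (a) $\Rightarrow$ (b) reduces to a short extremal argument over $\max f$ and $\min l$, and (b) $\Rightarrow$ (a) follows directly from the structure of the factorization.
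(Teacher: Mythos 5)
Your proof is correct and follows essentially the same approach as the paper: both proofs pick the variable whose leftmost occurrence is rightmost (your $z^\ast$ with $f(z^\ast)=\max f$, the paper's $z_m$) and split $\alpha$ at that position, and both use the same sandwiching observation for (b) $\Rightarrow$ (a). You phrase the forward direction directly via pairwise-overlapping intervals and one-dimensional Helly rather than the paper's contrapositive, but the underlying combinatorial idea is identical.
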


\begin{proof}
We prove by contraposition that \ref{a_stat} implies \ref{b_stat}. Hence, we assume that there exists no $z \in \Gamma$ such that $\alpha$ can be factorised into $\alpha = \beta \cdot z \cdot \gamma$ with $(\Gamma \slash \{z\}) \subseteq (\var(\beta) \cap \var(\gamma))$. Next, we define $l_{1}, l_{2}, \ldots, l_{m}$ to be the leftmost occurrences and $r_{1}, r_{2}, \ldots, r_{m}$ to be the rightmost occurrences of the variables $z_{1}, z_{2}, \ldots, z_{m}$. Furthermore, we assume $l_{1} < l_{2} < \ldots < l_{m}$. By assumption, it is not possible that, for every $i$, $1 \leq i \leq m - 1$, $r_i > l_m$ as this implies that $\alpha$ can be factorised into $\alpha = \beta \cdot z_m \cdot \gamma$, $|\beta| = l_m - 1$ with $(\Gamma \slash \{z_m\}) \subseteq (\var(\beta) \cap \var(\gamma))$. So we can assume that there exists an $i$, $1 \leq i \leq m - 1$, with $r_i < l_m$. This implies that, for $z_i, z_m$, $\alpha$ can neither be factorised into $\alpha = \beta \cdot z_i \cdot \gamma \cdot z_m \cdot \gamma' \cdot z_i \cdot \delta$ nor into $\alpha = \beta \cdot z_m \cdot \gamma \cdot z_i \cdot \gamma' \cdot z_m \cdot \delta$. This proves that \ref{a_stat} implies \ref{b_stat}.\par
The converse statement, \ref{b_stat} implies \ref{a_stat}, can be easily comprehended. We assume that $z \in \Gamma$ satisfies the conditions of \ref{b_stat}, i.\,e., $\alpha$ can be factorised into $\alpha = \beta \cdot z \cdot \gamma$ with $(\Gamma \slash \{z\}) \subseteq (\var(\beta) \cap \var(\gamma))$. Now we arbitrarily choose $z', z'' \in \Gamma$, $z' \neq z''$, and we shall show that $\alpha = \beta' \cdot z' \cdot \gamma' \cdot z'' \cdot \gamma'' \cdot z' \cdot \delta'$ or $\alpha = \beta' \cdot z'' \cdot \gamma' \cdot z' \cdot \gamma'' \cdot z'' \cdot \delta'$. If either $z' = z$ or $z'' = z$, this is obviously true. In all other cases, the fact that there are occurrences of both $z'$ and $z''$ to either side of the occurrence of $z$ directly implies the existence of one of the aforementioned factorisations.
\end{proof}

Now we are able to present the proof of Theorem~\ref{janusConstructionTheorem}:

\begin{proof}
Let $\pi := \cn(\Delta_{\alpha}) + 1$. In order to prove Theorem~\ref{janusConstructionTheorem}, we illustrate a general way of transforming a Janus operating mode $\Delta_{\alpha} := (D_1, D_2, \ldots, D_k)$ of an arbitrary terminal-free pattern $\alpha := y_1 \cdot y_2 \cdot \ldots \cdot y_n$ into a Janus automaton $M$ with $\cn(\Delta_{\alpha}) + 1$ counters satisfying $L(M) = L(\alpha)$. We shall first give a definition of the automaton and then prove its correctness, i.\,e., $L(M) = L(\alpha)$.\par 
We assume that the Janus operating mode is derived from the complete matching order $(m_1, m_2, \ldots, m_k)$. Let us recall the main definitions that are used in this proof, namely the complete matching order and the Janus operating mode. We know that each element $m_i$, $1 \leq i \leq k$, of the complete matching order is a matching position, i.\,e., $m_i = (l_i, r_i)$, $l_i < r_i$ and $y_{l_i} = y_{r_i}$. The complete matching order is included in the Janus operating mode, since, for each $i$, $1 \leq i \leq k$, the tuple $D_i$ corresponds to the matching position $m_i$ in the following way: If $m_i = (l_i, r_i)$, then the last two elements of $D_i$ are $(r_i, \rho)$ and $(l_i, \lambda)$. All the other pairs in a $D_i$ are of form $(j, \mu)$ where $1 \leq j \leq |\alpha|$ and $\mu \in \{\lambda, \rho\}$. \par
Before we move on to the formal definitions of the states and transitions of the automaton, let us illustrate its behaviour in an informal way. As described at the beginning of Section~\ref{sec:pattern}, the membership problem can be solved by checking the existence of a characteristic factorisation $u_1 \cdot u_2 \cdot \ldots \cdot u_n$ of the input $w$. Furthermore, by Lemma~\ref{matchingOrderLemma}, the complete matching order can be used as a list of instructions to perform this task. The factorisation is defined by the counter bounds, i.\,e., for every variable $x \in \var(\alpha)$, the automaton uses a certain counter, the counter bound of which defines the length of all the factors $u_i$ with $y_i = x$. However, if $\pi < |\var(\alpha)|$ is satisfied, then the automaton does not have the number of counters required for such a representation. Therefore, it might be necessary to reuse counters. To define which counter is used for which variables, we use a mapping $\co : \var(\alpha) \rightarrow \{1, 2, \ldots, \pi\}$. Note that, in case of $\pi < |\var(\alpha)|$, this mapping is not injective. We defer a complete definition of the mapping $\co$ and, for now, just assume that there exists such a mapping. \par
Next, we show how a tuple $D_p$ for an arbitrary $p$, $1 \leq p \leq k$, can be transformed into a part of the automaton. Therefore, we define
\begin{equation*}
D_p := ((j_1, \mu_1), (j_2, \mu_2), \ldots, (j_{k'}, \mu_{k'}), (j_r, \rho), (j_l, \lambda))
\end{equation*}
with $\mu_i \in \{\lambda, \rho\}$, $1 \leq i \leq k'$. Recall that $D_p$ corresponds to the matching position $m_p := (j_l, j_r)$. Let us interpret the tuple $D_p$ as follows: The pairs $(j_1, \mu_1), (j_2, \mu_2), \ldots, (j_{k'}, \mu_{k'})$ define how the heads have to be moved in order to reach factors $u_{j_l}$ and $u_{j_r}$, which then have to be matched. Let $(j_i, \mu_i)$, $1 \leq i \leq k'$, be an arbitrary pair of $D_p$. If $\mu_i = \lambda$ (or $\mu_i = \rho$), then the meaning of this pair is that the left head (or the right head, respectively) has to be moved a number of steps defined by the counter bound of counter $\co(y_{j_i})$. The direction the head has to be moved to depends on the matching position corresponding to the previous element $D_{p - 1}$. In order to define these ideas formally, we refer to this previous matching position by $m_{p - 1} := (j'_l, r'_l)$.\par
If $j'_l < j_l$, then we have to move the left head to the right passing the factors $u_{j'_{l} + 1}, u_{j'_l + 2}, \ldots, u_{j_l - 1}$; thus, we introduce the following states:
\begin{equation*}
\{\lforth_{p, q} \mid j'_l + 1 \leq q \leq j_l - 1\}\enspace.
\end{equation*}
In every state $\lforth_{p, q}$, $j'_l + 1 \leq q \leq j_l - 1$, we move the left head as many steps to the right as determined by the currently stored counter bound for counter $\co(y_{q})$. Hence, for every $q$, $j'_l + 1 \leq q \leq j_l - 1$, for all $a, a' \in \Sigma$ and for every $s_i \in \{\mathtt{t_{=}}, \mathtt{t_{<}}\}$, $i \in \{1, \ldots, \pi\} \slash \{\co(y_{q})\}$, we define
\begin{equation*}
\delta(\lforth_{p, q}, a, a', s_1, s_2, \ldots, s_{\pi}) := (\lforth_{p, q}, 1, 0, d_1, d_2, \ldots, d_{\pi})\enspace,
\end{equation*}
where $s_{\co(y_{q})} := \mathtt{t_{<}}$, $d_{\co(y_{q})} := 1$, and, for every $i \in \{1, \ldots, \pi\} \slash \{\co(y_{q})\}$, $d_i := 0$.\par
Analogously, if $j_l < j'_l$, then we have to move the left head to the left over the factors $u_{j'_{l}}, u_{j'_{l} - 1}, \ldots, u_{j_{l} + 1}, u_{j_{l}}$; to this end we use the following set of states:
\begin{equation*}
\{\lback_{p, q} \mid j_{l} \leq q \leq j'_l\}\enspace.
\end{equation*}
As before, for every $q$, $j_{l} \leq q \leq j'_l$, for all $a, a' \in \Sigma$ and for every $s_i \in \{\mathtt{t_{=}}, \mathtt{t_{<}}\}$, $i \in \{1, \ldots, \pi\} \slash \{\co(y_{q})\}$, we define
\begin{equation*}
\delta(\lback_{p, q}, a, a', s_1, s_2, \ldots, s_{\pi}) := (\lback_{p, q}, -1, 0, d_1, d_2, \ldots, d_{\pi})\enspace,
\end{equation*}
where $s_{\co(y_{q})} := \mathtt{t_{<}}$, $d_{\co(y_{q})} := 1$, and, for every $i \in \{1, \ldots, \pi\} \slash \{\co(y_{q})\}$, $d_i := 0$.\par
Note that, in the above defined transitions, the only difference between the cases $j'_l < j_l$ and $j_l < j'_l$, apart from the different states, is the head instruction for the left head. The states for the right head, i.\,e., $\rforth_{p, q}$ and
$\rback_{p, q}$, and their transitions are defined analogously.\par
Up to now, we have introduced states that can move the input heads back or forth over whole factors of the input word. This is done by moving an input head and simultaneously incrementing a counter until it reaches the counter bound, i.\,e., the counter message changes to $\mathtt{t_{=}}$. It remains to define what happens if an input head is completely moved over a factor and the counter message changes to $\mathtt{t_{=}}$. Intuitively, in this case the automaton should change to another state and then move a head in dependency of another counter. Thus, e.\,g., if in state $\lforth_{p, i}$ the counter message of counter $\co(y_{i})$ is $\mathtt{t_{=}}$, then the automaton should change into state $\lforth_{p, i + 1}$. In order to simplify the formal definition we assume $j'_l < j_l$ and $j'_r < j_r$, as all other cases can be handled similarly. 
For every $q$, $1 \leq q \leq k' - 1$, for all $a, a' \in \Sigma$ and for every $s_i \in \{\mathtt{t_{=}}, \mathtt{t_{<}}\}$, $i \in \{1, \ldots, \pi\} \slash \{\co(y_{q})\}$, we define
\begin{align*}
&\delta(\lforth_{p, q}, a, a', s_1, s_2, \ldots, s_{\pi}) := (\lforth_{p, q + 1}, 0, 0, d_1, d_2, \ldots, d_{\pi})\enspace,\\
&\mbox{if $\mu_p = \lambda$ and $\mu_{p + 1} = \lambda$}\enspace,
\end{align*}
\begin{align*}
&\delta(\lforth_{p, q}, a, a', s_1, s_2, \ldots, s_{\pi}) := (\rforth_{p, q + 1}, 0, 0, d_1, d_2, \ldots, d_{\pi})\enspace,\\
&\mbox{if $\mu_p = \lambda$ and $\mu_{p + 1} = \rho$}\enspace,
\end{align*}
\begin{align*}
&\delta(\rforth_{p, q}, a, a', s_1, s_2, \ldots, s_{\pi}) := (\lforth_{p, q + 1}, 0, 0, d_1, d_2, \ldots, d_{\pi})\enspace,\\
&\mbox{if $\mu_p = \rho$ and $\mu_{p + 1} = \lambda$}\enspace,
\end{align*}
\begin{align*}
&\delta(\rforth_{p, q}, a, a', s_1, s_2, \ldots, s_{\pi}) := (\rforth_{p, q + 1}, 0, 0, d_1, d_2, \ldots, d_{\pi})\enspace,\\
&\mbox{if $\mu_p = \rho$ and $\mu_{p + 1} = \rho$}\enspace,
\end{align*}
where $s_{\co(y_{q})} := \mathtt{t_{=}}$, $d_{\co(y_{q})} = 1$, and, for every $i \in \{1, \ldots, \pi\} \slash \{\co(y_{q})\}$, $d_i := 0$.\par
Now, for every $i$, $1 \leq i \leq k' - 1$, the transition changing the automaton from the state corresponding to the pair $(j_i, \mu_i)$ into the state corresponding to $(j_{i + 1}, \mu_{i + 1})$ has been defined. Note, that in these transitions we increment the counter $\co(y_{q})$ once more without moving the input head to set its value back to $0$ again, such that it is ready for the next time it is used. However, it remains to define what happens if the counter $\co(y_{j_{k'}})$ reaches its counter bound in the state that corresponds to the final pair $(j_{k'}, \mu_{k'})$. In this case, the automaton enters a new state $\match_p$, in which the factors $u_{j_l}$ and $u_{j_r}$ are matched. 
In the following definition, let $q := j_{k'}$. For all $a, a' \in \Sigma$ and for every $s_i \in \{\mathtt{t_{=}}, \mathtt{t_{<}}\}$, $i \in \{1, \ldots, \pi\} \slash \{\co(y_{q})\}$, we define
\begin{align*}
&\delta(\lforth_{p, q}, a, a', s_1, s_2, \ldots, s_{\pi}) := (\match_{p}, 0, 0, d_1, d_2, \ldots, d_{\pi})\enspace,\\
&\mbox{if $\mu_{j_{k'}} = \lambda$}\enspace,
\end{align*}
\begin{align*}
&\delta(\rforth_{p, q}, a, a', s_1, s_2, \ldots, s_{\pi}) := (\match_{p}, 0, 0, d_1, d_2, \ldots, d_{\pi})\enspace,\\
&\mbox{if $\mu_{j_{k'}} = \rho$}\enspace,
\end{align*}
where $s_{\co(y_{q})} := \mathtt{t_{=}}$, $d_{\co(y_{q})} := 1$, and, for every $i \in \{1, \ldots, \pi\} \slash \{\co(y_{q})\}$, $d_i := 0$.\par
In the state $\match_{p}$ the factors $u_{j_l}$ and $u_{j_r}$ are matched by simultaneously moving both heads to the right. 
In the following definition, let $q := j_l$. For every $a \in \Sigma$ and for every $s_i \in \{\mathtt{t_{=}}, \mathtt{t_{<}}\}$, $i \in \{1, \ldots, {\pi}\} \slash \{\co(y_{q})\}$, we define
\begin{equation*}
\delta(\match_{p}, a, a, s_1, s_2, \ldots, s_{\pi}) := (\match_{p}, 1, 1, d_1, d_2, \ldots, d_{\pi})\enspace,
\end{equation*}
where $s_{\co(y_{q})} := \mathtt{t_{<}}$, $d_{\co(y_{q})} := 1$, and, for every $i \in \{1, \ldots, {\pi}\} \slash \{\co(y_{q})\}$, $d_i := 0$.\par
Note, that these transitions are only applicable if both input heads scan the same symbol. If the symbol scanned by the left head differs from the one scanned by the right head, then no transition is defined and thus the automaton stops in a non-accepting state.\par
Finally, the very last transition to define in order to transform $D_p$ into a part of the automaton is the case when counter $\co(y_{j_l})$ has reached its counter bound in state $\match_{p}$. For the sake of convenience, we assume that the first pair of $D_{p + 1}$ is $(j', \lambda)$ and, furthermore, that $m_{p + 1} := (j''_l, j''_r)$ with $j_l < j''_l$.
For all $a, a' \in \Sigma$ and for every $s_i \in \{\mathtt{t_{=}}, \mathtt{t_{<}}\}$, $i \in \{1, \ldots, \pi\} \slash \{\co(y_{q})\}$, we define
\begin{equation*}
\delta(\match_{p}, a, a', s_1, s_2, \ldots, s_{\pi}) := (\lforth_{p+1, j'}, 0, 0, d_1, d_2, \ldots, d_{\pi})\enspace,
\end{equation*}
where $s_{\co(y_{q})} := \mathtt{t_{=}}$, $d_{\co(y_{q})} := 1$, and, for every $i \in \{1, \ldots, \pi\} \slash \{\co(y_{q})\}$, $d_i := 0$.\par
As mentioned above, this is merely the transition in the case that the first pair of $D_{p + 1}$ is $(j', \lambda)$ and $j_l < j''_l$ is satisfied. However, all the other cases can be handled analogously. In the case that the first pair of $D_{p + 1}$ is $(j', \rho)$ instead of $(j', \lambda)$ we have to enter state $\rforth_{p+1, j'}$ instead of $\lforth_{p+1, j'}$. If $j_l > j''_l$ holds instead of $j_l < j''_l$ we have to enter a back-state (e.\,g., $\lback_{p+1, j'}$) instead. These transitions can also be interpreted as the passage between the part of the automaton corresponding to $D_p$ and the part corresponding to the next tuple $D_{p + 1}$ of the Janus operating mode. \par
We have to explain a few special cases concerning the definitions above. Regarding the tuples $D_1$ and $D_k$ we have to slightly change the definitions. Initially, both heads are located at the very left position of the input, i.\,e., the left endmarker ``$\cent$'', therefore only $\lforth_{1, q}$ and $\rforth_{1, q}$ states are needed to transform $D_1$ into a part of the automaton. When the automaton is in state $\match_{k}$ and the counter has reached its counter bound, then the state $q_f$ is entered, which is the only final state of $M$. We recall, that $\alpha = y_1 \cdot y_2 \cdot \ldots \cdot y_n$. Whenever the automaton, for a $p$, $1 \leq p \leq k$, is in a state in  $\{\lforth_{p, n}, \lback_{p, n}, \rforth_{p, n}, \rback_{p, n}\}$ or in a state $\match_{p}$, where $m_{p} = (j, n)$, for some $j$, $j < n$, is a matching position, then this means that a head is moved over the rightmost factor $u_n$. When the automaton is in such a state for the first time and the counter bound of counter $\co(y_n)$ is reached, then the automaton blocks if the head does not scan the right endmarker ``$\$$'', as this implies $|u_1 \cdot u_2 \cdot \ldots \cdot u_n| < |w|$. In case that $|u_1 \cdot u_2 \cdot \ldots \cdot u_n| > |w|$ the automaton blocks at some point when it tries to move a head to the right that scans $\$$ since this transition is not defined. A formal definition of these special cases is omitted.\par
Obviously, each of the above defined transitions depend on a certain counter determined by the mapping $\co$, so let us now return to the problem of defining this mapping. As already mentioned, this mapping $\co$ is in general not injective, hence it is possible that $\co(x) = \co(z)$ for some $x \neq z$. This means, intuitively speaking, that there seems to be an undesirable connection between the lengths of factors $u_j$ with $y_j = x$ and factors $u_{j'}$ with $y_{j'} = z$. However, this connection does not have any effect if it is possible to, initially, exclusively use the counter bound of counter $\co(x) = \co(z)$ for factors corresponding to $x$ and then exclusively for factors corresponding to variable $z$ and never for factors corresponding to $x$ again. In this case the automaton may reset this counter after it has been used for factors corresponding to $x$ in order to obtain a new length for factors corresponding to $z$. This means that a counter is reused. We now formalise this idea.\par
Let $\overline{\Delta_{\alpha}} := ((d'_1, \mu'_1), (d'_2, \mu'_2), \ldots, (d'_{k''}, \mu'_{k''}))$ be the head movement indicator of the Janus operating mode. We consider the pattern $D_{\alpha} := y_{d'_1} \cdot y_{d'_2} \cdot \ldots \cdot y_{d'_{k''}}$. If, for some $x, z \in \var(\alpha)$, $x \neq z$, $D_{\alpha}$ can be factorised into $D_{\alpha} = \beta \cdot x \cdot \gamma \cdot z \cdot \gamma' \cdot x \cdot \delta$, then the automaton cannot use the same counter for variables $x$ and $z$; thus, $\co$ has to satisfy $\co(x) \neq \co(z)$.\par\bigskip\noindent
\emph{Claim} There exists a total mapping $\co : \var(\alpha) \rightarrow \{1, 2, \ldots, \pi\}$ such that, for all $x, z \in \var(\alpha)$, $x \neq z$, if $D_{\alpha} = \beta \cdot x \cdot \gamma \cdot z \cdot \gamma' \cdot x \cdot \delta$ or $D_{\alpha} = \beta \cdot z \cdot \gamma \cdot x \cdot \gamma' \cdot z \cdot \delta$, then $\co(x) \neq \co(z)$.\par\medskip\noindent
\emph{Proof (Claim).} If there is no set of variables $\Gamma \subseteq \var(\alpha)$ with $|\Gamma| > \pi$ such that for all $x, z \in \Gamma$, $x \neq z$, $D_{\alpha} = \beta \cdot x \cdot \gamma \cdot z \cdot \gamma' \cdot x \cdot \delta$ or $D_{\alpha} = \beta \cdot z \cdot \gamma \cdot x \cdot \gamma' \cdot z \cdot \delta$, then there obviously exists such a mapping $\co$. So we assume to the contrary, that there exists a set of variables $\Gamma$, $|\Gamma| = \pi + 1$, with the above given properties. Now we can apply Lemma~\ref{crossingLemma} to the pattern $D_{\alpha}$ and conclude that there exist a $z' \in \Gamma$ such that $D_{\alpha}$ can be factorised into $D_{\alpha} = \beta \cdot z' \cdot \gamma$ with $(\Gamma \slash \{z'\}) \subseteq (\var(\beta) \cap \var(\gamma))$. This directly implies $\cn(\Delta_{\alpha}) \geq \pi = \cn(\Delta_{\alpha}) + 1$, which is a contradiction.\par \hfill \emph{q.e.d. (Claim)} \par\bigskip\noindent
This shows that such a mapping $\co$ exists and, furthermore, we can note that it is straightforward to effectively construct it.\par
As already mentioned above, it may be necessary for the automaton to reset counters. More formally, if, for some $j$, $1 \leq j \leq \pi$, and for some $x, z \in \var(\alpha)$, $x \neq z$, $\co(x) = \co(z) = j$, then this counter $j$ must be reset. We now explain how this is done. By definition of the states and transitions so far, we may interpret states as being related to factors $u_q$, i.\,e., for every $p$, $1 \leq p \leq k$, and every $q$, $1 \leq q \leq n$, the states in $\{\lforth_{p, q}, \lback_{p, q}, \rforth_{p, q}, \rback_{p, q}\}$ correspond to factor $u_q$ and state $\match_{p}$ corresponds to both factors $u_l$ and $u_r$, where $m_p = (l, r)$. For every $x \in \var(\alpha)$, the automaton resets counter $\co(x)$, using the special counter instruction $\mathtt{r}$, immediately after leaving the last state corresponding to a factor $u_q$ with $y_q = x$. In order to define this transition formally, we assume that, for example, $\lforth_{p, q}$ with $y_q = x$ is that state and $\lforth_{p, q + 1}$ is the subsequent state.
For all $a, a' \in \Sigma$ and for every $s_i \in \{\mathtt{t_{=}}, \mathtt{t_{<}}\}$, $i \in \{1, \ldots, \pi\} \slash \{\co(x)\}$, we define
\begin{equation*}
\delta(\lforth_{p, q}, a, a', s_1, s_2, \ldots, s_{\pi}) = (\lforth_{p, q + 1}, 0, 0, d_1, d_2, \ldots, d_{\pi})\enspace,
\end{equation*}
where $s_{\co(x)} := \mathtt{t_{=}}$, $d_{\co(x)} := \mathtt{r}$, and, for every $i \in \{1, \ldots, \pi\} \slash \{\co(x)\}$, $d_i := 0$.\par
We recall, that by definition of a Janus automaton, all counter bounds are initially $0$, so the automaton must initially reset all $\pi$ counters. To define this transition formally, let $\lforth_{1,1}$ be the state corresponding to the first element of $D_1$. The first transition is defined by
\begin{equation*}
\delta(q_0, \cent, \cent, \mathtt{t_{=}}, \mathtt{t_{=}}, \ldots, \mathtt{t_{=}}) = (\lforth_{1,1}, 0, 0, \mathtt{r}, \mathtt{r}, \ldots, \mathtt{r})\enspace,
\end{equation*}
where $q_0$ is the initial state of $M$. This concludes the definition of the automaton and we shall now prove its correctness, i.\,e., $L(M) = L(\alpha)$.\par
Let $w \in \Sigma^*$ be an arbitrary input word. From the above given definition, it is obvious that the automaton treats $w$ as a sequence of factors $u_1 \cdot u_2 \cdot \ldots \cdot u_n$. The lengths of these factors $u_i$, $1 \leq i \leq n$, are determined by the counter bounds guessed during the computation. If $|u_1 \cdot u_2 \cdot \ldots \cdot u_n| \neq |w|$, then the automaton does not accept the input anyway, so we may only consider those cases where suitable counter bounds are guessed that imply $|u_1 \cdot u_2 \cdot \ldots \cdot u_n| = |w|$. Recall the complete matching order $(m_1, m_2, \ldots, m_k)$ with $m_p = (l_p, r_p)$, $1 \leq p \leq k$. By definition, in the states $\match_p$, $1 \leq p \leq k$, the automaton matches factor $u_{l_p}$ and $u_{r_p}$. If $M$ reaches the accepting state $q_f$, then, for every $p$, $1 \leq p \leq k$, $u_{l_p} = u_{r_p}$ and, by applying Lemma~\ref{matchingOrderLemma}, we conclude that $u_1 \cdot u_2 \cdot \ldots \cdot u_n$ is a characteristic factorisation. Hence, $w \in L(\alpha)$.\par
On the other hand, let $w' \in L(\alpha)$ be arbitrarily chosen. This implies that we can factorise $w'$ into $w' = u_1 \cdot u_2 \cdot \ldots \cdot u_n$ such that for all $j, j'$, $1 \leq j < j' \leq n$, $y_j = y_{j'}$ implies $u_j = u_{j'}$, i.\,e., $u_1 \cdot u_2 \cdot \ldots \cdot u_n$ is a characteristic factorisation. By definition, it is possible that the automaton guesses counter bounds such that the input word $w'$ is treated in this factorisation $w' = u_1 \cdot u_2 \cdot \ldots \cdot u_n$, so $M$ accepts $w'$ and thus $w' \in L(M)$. Consequently, $L(M) = L(\alpha)$, which concludes the proof of correctness, and hence the proof of Theorem~\ref{janusConstructionTheorem}.
\end{proof}

We conclude this section by discussing the previous results in a bit more detail. The main technical tool defined in this section is the Janus operating mode. So far, we interpreted Janus operating modes as instructions specifying how two input heads can be used to move over a word given in a certain factorisation in order to check on whether this factorisation is a characteristic one. So, in other words, a Janus operating mode can be seen as representing an algorithm, solving the membership problem for the pattern language given by a certain pattern. Theorem~\ref{janusConstructionTheorem} formally proves this statement.\par
A major benefit of this approach is, that from now on we can focus on Janus operating modes rather than on the more involved model of a Janus automaton. More precisely, the previous result shows that the task of finding an optimal Janus automaton for a terminal-free pattern language is equivalent to finding an optimal Janus operating mode for this pattern. Before we investigate this task in the subsequent section, we revise our perspective regarding Janus operating modes. There is no need to consider input words anymore and, thus, in the following we shall investigate properties of patterns and Janus operating modes exclusively. Therefore, we establish a slightly different point of view at Janus operating modes, i.\,e., we interpret them as describing input head movements over a pattern instead of over a word given in  a factorisation:

\begin{remark}\label{inputHeadsPerspectiveRemark}
Let $\Delta_{\alpha} := (D_1, D_2, \ldots, D_k)$ be an arbitrary Janus operating mode for some pattern $\alpha := y_1 \cdot y_2 \cdot \ldots \cdot y_n$ and let $\Delta_{\alpha}$ be derived from the complete matching order $(m_1, m_2, \ldots, m_k)$. Furthermore, let $\overline{\Delta_{\alpha}} := ((d'_1, \mu'_1)$, $(d'_2, \mu'_2), \ldots, (d'_{k'}, \mu'_{k'}))$ be the head movement indicator of the canonical Janus operating mode. We can interpret $\overline{\Delta_{\alpha}}$ as a sequence of input head movements over the pattern $\alpha$, i.\,e., after $i$ movements or steps of $\overline{\Delta_{\alpha}}$, where $1 \leq i \leq k'$, the left input head is located at variable $y_{d'_i}$ if $\mu'_i = \lambda$ or, in case that $\mu'_i = \rho$, the right input head is located at $y_{d'_i}$. So for every $i$, $1 \leq i \leq k'$, the sequence $\overline{\Delta_{\alpha}}$ determines the positions of both input heads after the first $i$ movements of $\overline{\Delta_{\alpha}}$. More precisely, for every $i$, $1 \leq i \leq k'$, after $i$ steps of $\overline{\Delta_{\alpha}}$, the positions $l_i$ and $r_i$ of the left head and the right head in $\alpha$ are given by 
\begin{align*}
l_i &= \max\{d'_j \mid 1 \leq j \leq i, \mu'_j = \lambda\} \text{ and}\\
r_i &= \max\{d'_j \mid 1 \leq j \leq i, \mu'_j = \rho\}\,.
\end{align*}
We note that $\{d'_j \mid 1 \leq j \leq i, \mu'_j = \lambda\} = \emptyset$ is possible, which means that $\mu_j = \rho$, $1 \leq j \leq i$, or, in other words, that so far only the right head has been moved. In this case, we shall say that the left head has not yet entered $\alpha$ and therefore is located at position $0$. The situation $\{d'_j \mid 1 \leq j \leq i, \mu'_j = \rho\} = \emptyset$ is interpreted analogously. As already mentioned above, for every $i$, $1 \leq i \leq k'$, we have either $l_i = d'_i$ or $r_i = d'_i$ (depending on $\mu_i$). Furthermore, for every $i$, $1 \leq i \leq k'$, it is not possible that both heads are located at position $0$.
\end{remark}

This special perspective towards Janus operating modes, described in the previous remark, shall play a central role in the proofs for the following results.

\section{Patterns with Restricted Variable Distance}\label{sec:varDist}

We now introduce a certain combinatorial property of terminal-free patterns, the so-called variable distance, which is the maximum number of different variables separating any two consecutive occurrences of a variable:
\begin{definition}\label{vdDefinition}
The \emph{variable distance} of a terminal-free pattern $\alpha$ is the smallest number $k \geq 0$ such that, for every $x \in \var(\alpha)$, every factorisation $\alpha = \beta \cdot x \cdot \gamma \cdot x \cdot \delta$ with $\beta, \gamma, \delta \in \var(\alpha)^*$ and $|\gamma|_{x} = 0$ satisfies $|\var(\gamma)| \leq k$. We denote the variable distance of a terminal-free pattern $\alpha$ by $\vd(\alpha)$.
\end{definition}

Obviously, $\vd(\alpha) \leq |\var(\alpha)| - 1$ for all terminal-free patterns $\alpha$. To illustrate the concept of the variable distance, we consider the pattern $\beta' := x_1 \cdot x_2 \cdot x_3 \cdot x_2 \cdot x_3 \cdot x_1 \cdot x_4 \cdot x_3 \cdot x_5 \cdot x_5 \cdot x_4$. In the following figure, for every two successive occurrences of any variable in $\beta'$, the number of different variables occurring between these occurrences is pointed out:

\begin{tikzpicture}

\coordinate (coord0) at (0,0);
\coordinate (coord1) at ($(coord0) + (1,0)$);
\coordinate (coord2) at ($(coord1) + (1,0)$);
\coordinate (coord3) at ($(coord2) + (1,0)$);
\coordinate (coord4) at ($(coord3) + (1,0)$);
\coordinate (coord5) at ($(coord4) + (1,0)$);
\coordinate (coord6) at ($(coord5) + (1,0)$);
\coordinate (coord7) at ($(coord6) + (1,0)$);
\coordinate (coord8) at ($(coord7) + (1,0)$);
\coordinate (coord9) at ($(coord8) + (1,0)$);
\coordinate (coord10) at ($(coord9) + (1,0)$);

\draw ($(coord0) - (0.8,0)$) node {$\beta' = $};

\draw (coord0) node {$x_1$};
\draw (coord1) node {$x_2$};
\draw (coord2) node {$x_3$};
\draw (coord3) node {$x_2$};
\draw (coord4) node {$x_3$};
\draw (coord5) node {$x_1$};
\draw (coord6) node {$x_4$};
\draw (coord7) node {$x_3$};
\draw (coord8) node {$x_5$};
\draw (coord9) node {$x_5$};
\draw (coord10) node {$x_4$};

\draw[black] ($(coord0) + (-0.05,0.9)$) -- ($(coord0) + (-0.05,0.2)$);
\draw[black] ($(coord0) + (-0.30,0.2)$) -- ($(coord0) + (0.2,0.2)$);
\draw[black] ($(coord0) + (-0.30,0.2)$) -- ($(coord0) + (-0.30,0.1)$);
\draw[black] ($(coord0) + (0.2,0.2)$) -- ($(coord0) + (0.2,0.1)$);
\draw[black] ($(coord5) + (-0.05,0.9)$) -- ($(coord5) + (-0.05,0.2)$);
\draw[black] ($(coord5) + (-0.30,0.2)$) -- ($(coord5) + (0.2,0.2)$);
\draw[black] ($(coord5) + (-0.30,0.2)$) -- ($(coord5) + (-0.30,0.1)$);
\draw[black] ($(coord5) + (0.2,0.2)$) -- ($(coord5) + (0.2,0.1)$);
\draw[black] ($(coord0) + (-0.05,0.9)$) -- ($(coord5) + (-0.05,0.9)$);
\draw[black] ($(coord2) + (0.5,1.1)$) node {$2$};

\draw[black] ($(coord2) + (-0.05,0.4)$) -- ($(coord2) + (-0.05,0.2)$);
\draw[black] ($(coord2) + (-0.30,0.2)$) -- ($(coord2) + (0.2,0.2)$);
\draw[black] ($(coord2) + (-0.30,0.2)$) -- ($(coord2) + (-0.30,0.1)$);
\draw[black] ($(coord2) + (0.2,0.2)$) -- ($(coord2) + (0.2,0.1)$);
\draw[black] ($(coord4) + (-0.05,0.4)$) -- ($(coord4) + (-0.05,0.2)$);
\draw[black] ($(coord4) + (-0.30,0.2)$) -- ($(coord4) + (0.2,0.2)$);
\draw[black] ($(coord4) + (-0.30,0.2)$) -- ($(coord4) + (-0.30,0.1)$);
\draw[black] ($(coord4) + (0.2,0.2)$) -- ($(coord4) + (0.2,0.1)$);
\draw[black] ($(coord2) + (-0.05,0.4)$) -- ($(coord4) + (-0.05,0.4)$);
\draw[black] ($(coord3) + (0,0.6)$) node {$1$};

\draw[black] ($(coord6) + (-0.05,0.4)$) -- ($(coord6) + (-0.05,0.2)$);
\draw[black] ($(coord6) + (-0.30,0.2)$) -- ($(coord6) + (0.2,0.2)$);
\draw[black] ($(coord6) + (-0.30,0.2)$) -- ($(coord6) + (-0.30,0.1)$);
\draw[black] ($(coord6) + (0.2,0.2)$) -- ($(coord6) + (0.2,0.1)$);
\draw[black] ($(coord10) + (-0.05,0.4)$) -- ($(coord10) + (-0.05,0.2)$);
\draw[black] ($(coord10) + (-0.30,0.2)$) -- ($(coord10) + (0.2,0.2)$);
\draw[black] ($(coord10) + (-0.30,0.2)$) -- ($(coord10) + (-0.30,0.1)$);
\draw[black] ($(coord10) + (0.2,0.2)$) -- ($(coord10) + (0.2,0.1)$);
\draw[black] ($(coord6) + (-0.05,0.4)$) -- ($(coord10) + (-0.05,0.4)$);
\draw[black] ($(coord8) + (0,0.6)$) node {$2$};

\draw[black] ($(coord1) + (-0.025,-0.4)$) -- ($(coord1) + (-0.025,-0.2)$);
\draw[black] ($(coord1) + (-0.3,-0.2)$) -- ($(coord1) + (0.25,-0.2)$);
\draw[black] ($(coord1) + (-0.3,-0.2)$) -- ($(coord1) + (-0.3,-0.1)$);
\draw[black] ($(coord1) + (0.25,-0.2)$) -- ($(coord1) + (0.25,-0.1)$);
\draw[black] ($(coord3) + (-0.025,-0.4)$) -- ($(coord3) + (-0.025,-0.2)$);
\draw[black] ($(coord3) + (-0.3,-0.2)$) -- ($(coord3) + (0.25,-0.2)$);
\draw[black] ($(coord3) + (-0.3,-0.2)$) -- ($(coord3) + (-0.3,-0.1)$);
\draw[black] ($(coord3) + (0.25,-0.2)$) -- ($(coord3) + (0.25,-0.1)$);
\draw[black] ($(coord1) + (-0.025,-0.4)$) -- ($(coord3) + (-0.025,-0.4)$);
\draw[black] ($(coord2) + (0,-0.6)$) node {$1$};

\draw[black] ($(coord4) + (-0.025,-0.4)$) -- ($(coord4) + (-0.025,-0.2)$);
\draw[black] ($(coord4) + (-0.3,-0.2)$) -- ($(coord4) + (0.25,-0.2)$);
\draw[black] ($(coord4) + (-0.3,-0.2)$) -- ($(coord4) + (-0.3,-0.1)$);
\draw[black] ($(coord4) + (0.25,-0.2)$) -- ($(coord4) + (0.25,-0.1)$);
\draw[black] ($(coord7) + (-0.025,-0.4)$) -- ($(coord7) + (-0.025,-0.2)$);
\draw[black] ($(coord7) + (-0.3,-0.2)$) -- ($(coord7) + (0.25,-0.2)$);
\draw[black] ($(coord7) + (-0.3,-0.2)$) -- ($(coord7) + (-0.3,-0.1)$);
\draw[black] ($(coord7) + (0.25,-0.2)$) -- ($(coord7) + (0.25,-0.1)$);
\draw[black] ($(coord4) + (-0.025,-0.4)$) -- ($(coord7) + (-0.025,-0.4)$);
\draw[black] ($(coord5) + (0.5,-0.6)$) node {$2$};

\draw[black] ($(coord8) + (-0.025,-0.4)$) -- ($(coord8) + (-0.025,-0.2)$);
\draw[black] ($(coord8) + (-0.3,-0.2)$) -- ($(coord8) + (0.25,-0.2)$);
\draw[black] ($(coord8) + (-0.3,-0.2)$) -- ($(coord8) + (-0.3,-0.1)$);
\draw[black] ($(coord8) + (0.25,-0.2)$) -- ($(coord8) + (0.25,-0.1)$);
\draw[black] ($(coord9) + (-0.025,-0.4)$) -- ($(coord9) + (-0.025,-0.2)$);
\draw[black] ($(coord9) + (-0.3,-0.2)$) -- ($(coord9) + (0.25,-0.2)$);
\draw[black] ($(coord9) + (-0.3,-0.2)$) -- ($(coord9) + (-0.3,-0.1)$);
\draw[black] ($(coord9) + (0.25,-0.2)$) -- ($(coord9) + (0.25,-0.1)$);
\draw[black] ($(coord8) + (-0.025,-0.4)$) -- ($(coord9) + (-0.025,-0.4)$);
\draw[black] ($(coord8) + (0.5,-0.6)$) node {$0$};
\end{tikzpicture}\\
Referring to the previous figure it can be easily comprehended that $\vd(\beta') = 2$.\par
The problem of computing the variable distance $\vd(\alpha)$ for an arbitrary pattern $\alpha$ is not a difficult one as pointed out by the following proposition: 
\begin{proposition}\label{vdEfficiencyProp}
For every terminal-free pattern $\alpha$, the number $\vd(\alpha)$ can be efficiently computed.
\end{proposition}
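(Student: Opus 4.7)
The plan is to read the definition of $\vd(\alpha)$ directly as a specification of what must be computed: for each variable $x \in \var(\alpha)$ and each factorisation $\alpha = \beta \cdot x \cdot \gamma \cdot x \cdot \delta$ with $|\gamma|_{x} = 0$ (i.\,e., each pair of consecutive occurrences of $x$ in $\alpha$), we must determine $|\var(\gamma)|$; then $\vd(\alpha)$ is the maximum of all these numbers. Thus the proof amounts to exhibiting a polynomial-time procedure that enumerates all such ``gaps'' and, for each of them, counts the number of distinct variables it contains.

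Concretely, I would proceed as follows. First, in a single left-to-right scan of $\alpha = y_1 \cdot y_2 \cdot \ldots \cdot y_n$, compute for each variable $x \in \var(\alpha)$ the list $\varpos{x}(\alpha) = (p_{x,1} < p_{x,2} < \ldots < p_{x,n_x})$ of its positions; this step runs in time $O(n)$ (up to factors for handling the variable alphabet). Each consecutive pair $(p_{x,t}, p_{x,t+1})$ corresponds exactly to one factorisation of the form required by Definition~\ref{vdDefinition}, and across all variables the total number of such pairs is bounded by $n$. Second, for each such pair, inspect the factor $y_{p_{x,t}+1} \cdot y_{p_{x,t}+2} \cdot \ldots \cdot y_{p_{x,t+1}-1}$ and compute the cardinality of its variable set, which can be done by a linear scan combined with a boolean array (or hash set) indexed by variables. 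This gives a straightforward algorithm with total running time $O(n^{2})$, which is certainly polynomial.

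Finally, return the maximum value $k$ obtained in this way. Correctness is immediate from Definition~\ref{vdDefinition}: the algorithm examines precisely the set of factorisations $\alpha = \beta \cdot x \cdot \gamma \cdot x \cdot \delta$ with $|\gamma|_{x} = 0$ (one per consecutive pair of occurrences of any $x \in \var(\alpha)$) and computes $|\var(\gamma)|$ for each. Since any factorisation $\alpha = \beta' \cdot x \cdot \gamma' \cdot x \cdot \delta'$ with $|\gamma'|_{x} = 0$ must use some pair of consecutive occurrences of $x$, no relevant factorisation is missed, and no irrelevant one is counted.

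There is essentially no substantial obstacle in this argument: the definition is structural and translates into an algorithm in a routine way. The only minor point worth noting is the assumed cost model for alphabet operations; under a standard RAM-style assumption (or by first renaming the variables in $\var(\alpha)$ to $\{1, \ldots, |\var(\alpha)|\}$ via a preliminary pass) both the variable-position extraction and the per-gap distinct-variable count are linear, which keeps the total running time comfortably polynomial and justifies the proposition.
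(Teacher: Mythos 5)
Your proposal is correct and rests on the same underlying observation as the paper's proof, namely that $\vd(\alpha)$ is the maximum, over all pairs of consecutive occurrences of any variable, of the number of distinct variables in the intervening gap; the paper merely organises the counting as a single left-to-right scan with incrementally maintained sets $S_x$ rather than rescanning each gap separately. Both yield a polynomial (quadratic) bound, which is all the proposition requires.
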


\begin{proof}
Let $\alpha := y_1 \cdot y_2 \cdot \ldots \cdot y_n$ be a terminal-free pattern. It is possible to compute the variable distance of $\alpha$ in the following way. We move over $\alpha$ from left to right. Whenever a variable $x$ is encountered for the first time, we initialise a set $S_x$, which we delete again after passing the last occurrence of $x$. Furthermore, for every $x$ that we pass, we add $x$ to all existing sets $S_{x\rq{}}$, $x \neq x\rq{}$, and completely empty the set $S_x$. The variable distance is then the maximum cardinality of any of these sets during this procedure.
\end{proof}

The following vital result shows that for every possible Janus operating mode for some pattern $\alpha$, its counter number is at least equal to the variable distance of $\alpha$. Hence, the variable distance is a lower bound for the counter number of Janus operating modes. 

\begin{theorem}\label{cnGeqVd}
Let $\Delta_{\alpha}$ be an arbitrary Janus operating mode for a terminal-free pattern $\alpha$. Then $\cn(\Delta_{\alpha}) \geq \vd(\alpha)$.
\end{theorem}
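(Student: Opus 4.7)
The plan is to reduce the inequality to Lemma~\ref{crossingLemma}, applied to the pattern $D_\alpha$ induced by the head movement indicator $\overline{\Delta_\alpha}$, in exactly the same fashion as the Claim inside the proof of Theorem~\ref{janusConstructionTheorem}. First I fix a witness of $\vd(\alpha) = k$: a variable $x \in \var(\alpha)$ and a factorisation $\alpha = \beta \cdot x \cdot \gamma \cdot x \cdot \delta$ with $|\gamma|_x = 0$ and $|\var(\gamma)| = k$, writing $Z := \var(\gamma) = \{z_1, \ldots, z_k\}$ and $p_1 < p_2$ for the positions of the two distinguished occurrences of $x$. Since $\Delta_\alpha$ is derived from a complete matching order, the matching graph for $x$ is connected on the positions of $x$ in $\alpha$, and hence it contains at least one ``crossing'' matching pair $(l, r)$ with $l \leq p_1 < p_2 \leq r$; I denote the tuple of $\Delta_\alpha$ realising this edge by $D_{p^*}$, whose last two entries, by Definition~\ref{janusOperatingModeDefinition}, are $(r, \rho)$ and $(l, \lambda)$.

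I then consider $\Gamma := \{x\} \cup Z$, a set of $k + 1$ distinct variables, and aim to show that $\Gamma$ satisfies condition~\ref{a_stat} of Lemma~\ref{crossingLemma} inside the pattern $D_\alpha$: for every pair of distinct $y, y' \in \Gamma$, the pattern $D_\alpha$ admits a factorisation $D_\alpha = \beta' \cdot y \cdot \gamma' \cdot y' \cdot \gamma'' \cdot y \cdot \delta'$ or $D_\alpha = \beta' \cdot y' \cdot \gamma' \cdot y \cdot \gamma'' \cdot y' \cdot \delta'$. Once this is in place, Lemma~\ref{crossingLemma}(\ref{b_stat}) supplies a variable $y^* \in \Gamma$ together with an occurrence $i^*$ of $y^*$ in $D_\alpha$ such that each of the other $k$ variables of $\Gamma$ occurs in $D_\alpha$ both strictly before and strictly after $i^*$. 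Since each of these $k$ variables differs from $y_{d'_{i^*}} = y^*$, Definition~\ref{counterNumber} yields $s_{i^*} \geq k$ and hence $\cn(\Delta_\alpha) \geq k = \vd(\alpha)$.

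The only real work is the verification of condition~\ref{a_stat}. The principal tool is the trajectory constraint built into Definition~\ref{janusOperatingModeDefinition}: each head starts at position $0$ and, step by step, its position changes by at most one, so the first time either head is placed at a position $q > 0$ in $\overline{\Delta_\alpha}$, it must already have been placed at each one of $1, 2, \ldots, q - 1$. Applied to the head that first reaches $p_2$, this produces orderings of first visits $p_1 < q_z < p_2$ and, whenever $q_z < q_{z'}$ are two positions of $z, z' \in Z$ inside $\gamma$, $p_1 < q_z < q_{z'} < p_2$; reading off the variables at these steps in $\overline{\Delta_\alpha}$ immediately furnishes the factorisation $\ldots x \ldots z \ldots x \ldots$ for the sub-case $\{y, y'\} = \{x, z\}$, and supplies the forward direction of the $(z, z')$-crossing. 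The ``return'' direction for two variables of $Z$ uses the additional structure of $D_{p^*}$: navigating from the preceding configuration $(l_{p^* - 1}, r_{p^* - 1})$ to $(l, r)$ forces at least one head to sweep across $\gamma$, revisiting the positions of $z$ and $z'$, and these revisits, combined with the independent matching tuples $D_{p_z}$ and $D_{p_{z'}}$ scheduled elsewhere in the operating mode, produce the additional occurrence of $z$ after $z'$ (or of $z'$ after $z$) needed to close the crossing in $D_\alpha$.

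The main obstacle is precisely this last sub-case. Two variables of $Z$ need not cross each other in $\alpha$ at all (they may sit side-by-side in $\gamma$), so the proof must exploit exactly the revisits created by navigation through $\gamma$, and the argument requires a careful case analysis on the relative order of $D_{p^*}$, $D_{p_z}$ and $D_{p_{z'}}$ in the complete matching order together with the head-interleaving conditions of Definition~\ref{janusOperatingModeDefinition}: when $D_{p_z}$ or $D_{p_{z'}}$ lies on the opposite side of $D_{p^*}$ from the first-visit occurrence supplied by the trajectory argument, it directly delivers the missing occurrence, while when both lie on the same side one has to invoke the sweep across $\gamma$ inside $D_{p^*}$ (or a symmetric sweep of a later $D_j$) to locate the required revisit.
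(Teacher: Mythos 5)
Your overall strategy is exactly the paper's: fix the witness factorisation $\alpha = \beta\cdot x\cdot\gamma\cdot x\cdot\delta$ with $|\gamma|_x=0$ and $|\var(\gamma)|=\vd(\alpha)$, set $\Gamma := \var(\gamma)\cup\{x\}$, verify condition~\ref{a_stat} of Lemma~\ref{crossingLemma} for $\Gamma$ inside $D_\alpha$, and extract $\cn(\Delta_\alpha)\ge|\Gamma|-1$ from condition~\ref{b_stat} via Definition~\ref{counterNumber}. The unit-step/first-visit trajectory argument for pairs that already cross in $\alpha$ (in particular every pair $\{x,z\}$ with $z\in\var(\gamma)$) is also the paper's argument and is sound.

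The gap lies in the sub-case you yourself flag as the main obstacle: $z,z'\in\var(\gamma)$ whose occurrences are separated in $\alpha$, say with the rightmost occurrence $r_z$ of $z$ to the left of the leftmost occurrence $l_{z'}$ of $z'$ (both necessarily inside $\gamma$). Your proposed mechanism --- that the single transition from the configuration after $D_{p^*-1}$ into the $x$-spanning matching position $(l,r)$ ``forces at least one head to sweep across $\gamma$, revisiting the positions of $z$ and $z'$'' --- is not correct as stated: if $D_{p^*}$ is scheduled first, the right head's sweep across $\gamma$ consists of first visits only and contributes exactly one occurrence of each of $z$ and $z'$ to $D_\alpha$, so nothing in that transition supplies the third, closing occurrence. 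The correct source of the return visit is not the navigation into $D_{p^*}$ but the conjunction of two obligations of the \emph{left} head: it must at some point stand at $l\le p_1$ (the left element of the $x$-spanning matching position, which exists by connectivity since $x\notin\var(\gamma)$), and at some other point stand at $l_{z'}$ (the leftmost occurrence of $z'$ is necessarily the \emph{left} element of some matching position). Since $l\le p_1 < r_z < l_{z'}$, travelling between these two stations --- in whichever temporal order the two matching positions are scheduled --- forces the left head across $r_z$, and in both orders the resulting sequence of visits reads $z\cdots z'\cdots z$. Note also that the matching tuple for $z$ plays no role (the closing occurrence of $z$ is picked up in passing at $r_z$), so the three-way case analysis on $D_{p^*}$, $D_{p_z}$ and $D_{p_{z'}}$ you envisage is not the right decomposition; what is needed is a two-way split on the order of the $x$-tuple and the $z'$-tuple. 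With that sub-case repaired, the rest of your argument goes through and coincides with the paper's proof.
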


\begin{proof}
Let $\alpha := y_1 \cdot y_2 \cdot \ldots \cdot y_n$ be a terminal-free pattern and let $(m_1, m_2, \ldots, m_k)$ be the complete matching order for $\alpha$ from which $\Delta_{\alpha} := (D_1, D_2, \ldots, D_{k})$ is derived. Furthermore, let $\overline{\Delta_{\alpha}} := ((d'_1, \mu'_1), (d'_2, \mu'_2), \ldots, (d'_{k'}, \mu'_{k'}))$ be the head movement indicator of the Janus operating mode. This sequence $\overline{\Delta_{\alpha}}$ contains numbers $d'_i$, $1 \leq i \leq k'$, that are positions of $\alpha$, i.\,e., $1 \leq d'_i \leq |\alpha|$, $1 \leq i \leq k'$. Hence, we can associate a pattern $D_{\alpha}$ with $\overline{\Delta_{\alpha}}$ and $\alpha$ in the following way: $D_{\alpha} := y_{d'_1} \cdot y_{d'_2} \cdot \ldots \cdot y_{d'_{k'}}$. By definition of the variable distance, we know that there exists an $x \in \var(\alpha)$ such that $\alpha = \beta \cdot x \cdot \gamma \cdot x \cdot \delta$ with $|\gamma|_{x} = 0$ and $|\var(\gamma)| = \vd(\alpha)$. We assume $\vd(\alpha) \geq 1$ (i.\,e., $\var(\gamma) \neq \emptyset$), as in the case $\vd(\alpha) = 0$, $\cn(\Delta_{\alpha}) \geq \vd(\alpha)$ trivially holds.\par
In the following, let $\Gamma := \var(\gamma) \cup \{x\}$. We shall prove the statement of the theorem by showing that there exists a variable $z \in \Gamma$ such that $D_{\alpha} = \overline{\beta} \cdot z \cdot \overline{\gamma}$ with $|(\var(\overline{\beta}) \cap \var(\overline{\gamma})) \slash \{z\}| \geq \vd(\alpha)$, which implies $\cn(\Delta_{\alpha}) \geq \vd(\alpha)$. To this end, we first prove the following claim:
\par\bigskip\noindent
\emph{Claim} For all $z, z' \in \Gamma$, $z \neq z'$, we can factorise $D_{\alpha}$ into $D_{\alpha} = \widetilde{\beta} \cdot z \cdot \widetilde{\gamma_1} \cdot z' \cdot \widetilde{\gamma_2} \cdot z \cdot \widetilde{\delta}$ or $D_{\alpha} = \widetilde{\beta} \cdot z' \cdot \widetilde{\gamma_1} \cdot z \cdot \widetilde{\gamma_2} \cdot z' \cdot \widetilde{\delta}$.
\par\medskip\noindent
\emph{Proof (Claim).} 
For arbitrary $z, z' \in \Gamma$, $z \neq z'$, there are two possible cases regarding the positions of the occurrences of $z$ and $z'$ in $\alpha$. The first case describes the situation that there exists an occurrence of $z'$ (or $z$) in $\alpha$ such that $z$ (or $z'$, respectively) occurs to the left and to the right of this occurrence. If this is not possible, the occurrences of $z$ and $z'$ are separated, i.\,e., the rightmost occurrence of $z$ (or $z'$) is to the left of the leftmost occurrence of $z'$ (or $z$, respectively). More formally, it is possible to factorise $\alpha$ into
\begin{equation}\label{equ1}
\alpha = \widehat{\beta} \cdot z \cdot \widehat{\gamma_1} \cdot z' \cdot \widehat{\gamma_2} \cdot z \cdot \widehat{\delta}
\end{equation}
or into
\begin{equation}\label{equ2}
\alpha = \beta \cdot x \cdot \widehat{\gamma_1} \cdot z \cdot \widehat{\gamma_2} \cdot z' \cdot \widehat{\gamma_3} \cdot x \cdot \delta
\end{equation}
with $|\beta \cdot x \cdot \widehat{\gamma_1} \cdot z \cdot \widehat{\gamma_2}|_{z'} = 0$ and $|\widehat{\gamma_2} \cdot z' \cdot \widehat{\gamma_3} \cdot x \cdot \delta|_z = 0$. The two factorisations obtained by changing the roles of $z$ and $z'$ can be handeled analogously and are, thus, omitted. We note that in the second factorisation, $\widehat{\gamma_1} \cdot z \cdot \widehat{\gamma_2} \cdot z' \cdot \widehat{\gamma_3}$ equals the factor $\gamma$ from the above introduced factorisation $\alpha = \beta \cdot x \cdot \gamma \cdot x \cdot \delta$. This is due to the fact that we assume $z, z' \in \Gamma$.\par
We first observe that $z = x$ or $z' = x$ implies that the first factorisation is possible. If we cannot factorise $\alpha$ according to factorisation (\ref{equ1}), then we can conclude that the rightmost occurrence of $z$ is to the left of the leftmost occurrence of $z'$ and, furthermore, as both $z, z' \in \Gamma$ and $z \neq x \neq z'$, these occurrences are both in the factor $\gamma$. Hence, factorisation (\ref{equ2}) applies. We now show that in both cases the variables $z, z'$ satisfy the property described in the Claim. However, throughout the following argumentations, we need to bear in mind that the claim made above describes a property of $D_{\alpha}$ and the two considered factorisations are factorisations of $\alpha$.\par
We start with the case that $\alpha$ can be factorised into $\alpha = \widehat{\beta} \cdot z \cdot \widehat{\gamma_1} \cdot z' \cdot \widehat{\gamma_2} \cdot z \cdot \widehat{\delta}$. Let $p := |\widehat{\beta} \cdot z \cdot \widehat{\gamma_1} \cdot z' \cdot \widehat{\gamma_2}| + 1$, thus $y_{p} = z$. In the complete matching order $(m_1, \ldots, m_k)$ there has to be an $m_q$, $1 \leq q \leq k$, with $m_q := (j_l, j_r)$ and either $j_l = p$ or $j_r = p$. We assume that $j_l = p$; the case $j_r = p$ can be handled analogously. This implies, by definition of Janus operating modes, that the last element of $D_q$ is $(p, \lambda)$.\par
In the following, we interpret the Janus operating mode as a sequence of input head movements over $\alpha$, as explained in Remark~\ref{inputHeadsPerspectiveRemark}. Both heads start at the very left position of the input, so in order to move the left head to position $p$ in the pattern, it has to pass the whole part to the left of position $p$, i.\,e.\ $y_1 \cdot y_2 \cdot \ldots y_{p - 1}$, from left to right (possibly changing directions several times). In this initial part of the pattern, the variables $z$ and $z'$ occur in exactly this order. We conclude that the left head has to pass an occurrence of $z$, then pass an occurrence of $z'$ and finally reaches position $p$, where variable $z$ occurs. Regarding $D_{\alpha}$ this means that a factorisation $D_{\alpha} = \widetilde{\beta} \cdot z \cdot \widetilde{\gamma_1} \cdot z' \cdot \widetilde{\gamma_2} \cdot z \cdot \widetilde{\delta}$ is possible.\par
Next, we consider the case that it is not possible to factorise $\alpha = \widehat{\beta} \cdot z \cdot \widehat{\gamma_1} \cdot z' \cdot \widehat{\gamma_2} \cdot z \cdot \widehat{\delta}$. As explained above, this implies that $\alpha = \beta \cdot x \cdot \widehat{\gamma_1} \cdot z \cdot \widehat{\gamma_2} \cdot z' \cdot \widehat{\gamma_3} \cdot x \cdot \delta$ with $|\beta \cdot x \cdot \widehat{\gamma_1} \cdot z \cdot \widehat{\gamma_2}|_{z'} = 0$ and $|\widehat{\gamma_2} \cdot z' \cdot \widehat{\gamma_3} \cdot x \cdot \delta|_z = 0$. Let $r_z := |\beta \cdot x \cdot \widehat{\gamma_1}| + 1$ and $l_{z'} := |\beta \cdot x \cdot \widehat{\gamma_1} \cdot z \cdot \widehat{\gamma_2}| + 1$ be the positions of the variables $z$ and $z'$ pointed out in the factorisation above. Obviously, $r_z$ is the rightmost occurrence of $z$ and $l_{z'}$ is the leftmost occurrence of $z'$. These positions $r_z$ and $l_{z'}$ have to be covered by some matching positions in the complete matching order $(m_1, \ldots, m_k)$, i.\,e., there exist matching positions $m_i := (l_z, r_z)$ and $m_{i'} := (l_{z'}, r_{z'})$. We can assume that $r_z$ is the right element and $l_{z'}$ the left element of a matching position, as these positions describe the rightmost and the leftmost occurrences of the variable $z$ and $z'$, respectively. Moreover, $(m_1, \ldots, m_k)$ has to contain a complete matching order for variable $x$ in $\alpha$. Since there is no occurrence of $x$ in the factor $\gamma$, this implies the existence of a matching position $m_{i''} := (l_x, r_x)$ with $l_x \leq |\beta| + 1$ and $|\beta \cdot x \cdot \widehat{\gamma_1} \cdot z \cdot \widehat{\gamma_2} \cdot z' \cdot \widehat{\gamma_3}| + 1 \leq r_x$. We simply assume that $l_x = |\beta| + 1$ and $r_x = |\beta \cdot x \cdot \widehat{\gamma_1} \cdot z \cdot \widehat{\gamma_2} \cdot z' \cdot \widehat{\gamma_3}| + 1$, as this is no loss of generality regarding the following argumentation. Hence, we deal with the following situation (recall that $l_x$, $r_x$, $r_z$ and $l_{z'}$ are positions of $\alpha$):\\

\begin{tikzpicture}

\draw (-0.5, 0.4) node {$\alpha = $};
\draw[black] (0,0) -- (0,0.8);
\draw[black] (0,0.8) -- (10.4,0.8);
\draw[black] (10.4,0.8) -- (10.4,0);
\draw[black] (10.4,0) -- (0,0);
\draw[black] (1.5,0) -- (1.5,0.8);
\draw[black] (2.0,0) -- (2.0,0.8);
\draw[black] (3.8,0) -- (3.8,0.8);
\draw[black] (4.3,0) -- (4.3,0.8);
\draw[black] (6.1,0) -- (6.1,0.8);
\draw[black] (6.6,0) -- (6.6,0.8);
\draw[black] (8.4,0) -- (8.4,0.8);
\draw[black] (8.9,0) -- (8.9,0.8);
\draw (0.75, 0.4) node {$\beta$};
\draw (1.75, 0.4) node {$x$};
\draw (2.9, 0.4) node {$\widehat{\gamma_1}$};
\draw (4.05, 0.4) node {$z$};
\draw (5.2, 0.4) node {$\widehat{\gamma_2}$};
\draw (6.35, 0.4) node {$z'$};
\draw (7.5, 0.4) node {$\widehat{\gamma_3}$};
\draw (8.65, 0.4) node {$x$};
\draw (9.65, 0.4) node {$\delta$};

\draw[black] (1.75, -0.2) -- (1.75, -0.5);
\draw[black] (1.75, -0.2) -- (1.85, -0.3);
\draw[black] (1.75, -0.2) -- (1.65, -0.3);
\draw (1.75, -0.8) node {$l_{x}$};

\draw[black] ($(1.75, -0.2) + (2.3,0)$) -- ($(1.75, -0.5) + (2.3,0)$);
\draw[black] ($(1.75, -0.2) + (2.3,0)$) -- ($(1.85, -0.3) + (2.3,0)$);
\draw[black] ($(1.75, -0.2) + (2.3,0)$) -- ($(1.65, -0.3) + (2.3,0)$);
\draw ($(1.75, -0.8) + (2.3,0)$) node {$r_{z}$};

\draw[black] ($(1.75, -0.2) + (4.6,0)$) -- ($(1.75, -0.5) + (4.6,0)$);
\draw[black] ($(1.75, -0.2) + (4.6,0)$) -- ($(1.85, -0.3) + (4.6,0)$);
\draw[black] ($(1.75, -0.2) + (4.6,0)$) -- ($(1.65, -0.3) + (4.6,0)$);
\draw ($(1.75, -0.8) + (4.6,0)$) node {$l_{z'}$};

\draw[black] ($(1.75, -0.2) + (6.9,0)$) -- ($(1.75, -0.5) + (6.9,0)$);
\draw[black] ($(1.75, -0.2) + (6.9,0)$) -- ($(1.85, -0.3) + (6.9,0)$);
\draw[black] ($(1.75, -0.2) + (6.9,0)$) -- ($(1.65, -0.3) + (6.9,0)$);
\draw ($(1.75, -0.8) + (6.9,0)$) node {$r_{x}$};

\end{tikzpicture}

Now, in the same way as before, we interpret the Janus operating mode as a sequence of input head movements. We proceed by considering two cases concerning the order of the matching positions $m_{i'} = (l_{z'}, r_{z'})$ and $m_{i''} = (l_x, r_x)$ in the complete matching order, i.\,e., either $i' < i''$ or $i'' < i'$. In the latter case, $i'' < i'$, the right input head is moved from the leftmost variable in $\alpha$ to position $r_x$, hence, it passes $z$ and $z'$ in this order. Furthermore, the left input head is moved to position $l_x$. After that, since $i'' < i'$, the left input head has to be moved from position $l_x$ to position $l_{z'}$, thus, passing position $r_z$ where variable $z$ occurs. Hence, we conclude $D_{\alpha} = \widetilde{\beta} \cdot z \cdot \widetilde{\gamma_1} \cdot z' \cdot \widetilde{\gamma_2} \cdot z \cdot \widetilde{\delta}$. Next, we assume $i' < i''$, so the left input head is moved from the leftmost variable in $\alpha$ to position $l_{z'}$, so again, an input head passes $z$ and $z'$ in this order. After that, the left input head is moved from position $l_{z'}$ to position $l_x$, thus, it passes variable $z$ on position $r_z$. Again, we can conclude $D_{\alpha} = \widetilde{\beta} \cdot z \cdot \widetilde{\gamma_1} \cdot z' \cdot \widetilde{\gamma_2} \cdot z \cdot \widetilde{\delta}$.
\hfill \emph{q.e.d. (Claim)}\par\bigskip\noindent
Hence, for all $z, z' \in \Gamma$, $z \neq z'$, $D_{\alpha}$ can be factorised into $D_{\alpha} = \widetilde{\beta} \cdot z \cdot \widetilde{\gamma_1} \cdot z' \cdot \widetilde{\gamma_2} \cdot z \cdot \widetilde{\delta}$ or $D_{\alpha} = \widetilde{\beta} \cdot z' \cdot \widetilde{\gamma_1} \cdot z \cdot \widetilde{\gamma_2} \cdot z' \cdot \widetilde{\delta}$, and therefore we can apply Lemma~\ref{crossingLemma} and conclude that there exists a $z \in \Gamma$ such that $D_\alpha$ can be factorised into $D_\alpha = \overline{\beta} \cdot z \cdot \overline{\gamma}$ with $(\Gamma \slash \{z\}) \subseteq (\var(\overline{\beta}) \cap \var(\overline{\gamma}))$. This directly implies that $\cn(\Delta_{\alpha}) \geq |\Gamma| - 1 = \vd(\alpha)$. 
\end{proof}

In the previous section, the task of finding an optimal Janus automaton for a pattern was shown to be equivalent to finding an optimal Janus operating mode for this pattern. Now, by the above result, a Janus operating mode $\Delta_{\alpha}$ for some pattern $\alpha$ is optimal if $\cn(\Delta_{\alpha}) = \vd(\alpha)$ is satisfied. Hence, our next goal is to find a Janus operating mode with that property. To this end, we shall first define a special complete matching order from which the optimal Janus operating mode is then derived.
\begin{definition}\label{canonicalMatchingOrderDefinition}
Let $\alpha := y_1 \cdot y_2 \cdot \ldots \cdot y_n$ be a terminal-free pattern with $p := |\var(\alpha)|$. For every $x_i \in \var(\alpha)$, let $\varpos{i}(\alpha) := \{j_{i, 1}, j_{i, 2}, \ldots, j_{i, n_i}\}$ with $n_i := |\alpha|_{x_i}$, $j_{i, l} < j_{i, l + 1}$, $1 \leq l \leq n_i - 1$. Let $(m_1, m_2, \ldots, m_k)$, $k = \sum_{i = 1}^{p} n_i - 1$, be an enumeration of the set $\{(j_{i, l}, j_{i, l + 1})~|~1 \leq i \leq p, 1 \leq l \leq n_i - 1\}$ such that, for every $i'$, $1 \leq i' < k$, the left element of the pair $m_{i'}$ is smaller than the left element of $m_{i' + 1}$.
We call $(m_1, m_2, \ldots, m_k)$ the \emph{canonical matching order for $\alpha$}.
\end{definition}

\begin{proposition}\label{canonicalMOProp}
Let $\alpha$ be a terminal-free pattern. The canonical matching order for $\alpha$ is a complete matching order.
\end{proposition}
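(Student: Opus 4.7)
The plan is to verify directly that the canonical matching order satisfies the two requirements of Definition~\ref{completeMatchingOrderDefinition}: (i) it has length $k = \sum_{i=1}^{m}(n_i-1)$, and (ii) for each variable $x_i \in \var(\alpha)$ the subsequence of its matching positions within $(m_1,m_2,\ldots,m_k)$ forms a matching order for $x_i$ in $\alpha$ in the sense of Definition~\ref{matchingOrderDef}.

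First I would note that, by construction, the set enumerated by $(m_1,\ldots,m_k)$ is exactly $\{(j_{i,l}, j_{i,l+1}) \mid 1 \leq i \leq p,\ 1 \leq l \leq n_i - 1\}$, whose cardinality is $\sum_{i=1}^{p}(n_i - 1)$, so requirement~(i) is immediate, and every pair $(j_{i,l}, j_{i,l+1})$ trivially lies in $\varpos{i}(\alpha)^2$ with $j_{i,l} < j_{i,l+1}$ (since the $j_{i,l}$ are listed in increasing order).

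Next, fix any $x_i \in \var(\alpha)$ and let $(m_{i,1}, m_{i,2}, \ldots, m_{i,n_i-1})$ denote the subsequence of $(m_1,\ldots,m_k)$ consisting of exactly those pairs whose components lie in $\varpos{i}(\alpha)$; by the definition of the canonical matching order, this subsequence consists precisely of the pairs $(j_{i,1}, j_{i,2})$, $(j_{i,2}, j_{i,3})$, $\ldots$, $(j_{i,n_i-1}, j_{i,n_i})$ (in some order induced by the global sorting on left components, which is irrelevant for connectivity). The graph $(\varpos{i}(\alpha), \{(j_{i,l}, j_{i,l+1}) \mid 1 \leq l \leq n_i - 1\})$ is then a path on the $n_i$ vertices $j_{i,1} < j_{i,2} < \cdots < j_{i,n_i}$ and is therefore connected, which shows that the subsequence is a matching order for $x_i$ in $\alpha$.

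Since this holds for every $x_i \in \var(\alpha)$, requirement~(ii) is met as well, and hence $(m_1, m_2, \ldots, m_k)$ is a complete matching order for $\alpha$. There is no real obstacle here: the statement is essentially a book-keeping verification that the enumeration of consecutive-occurrence pairs, taken in any order, furnishes both the path-like connecting structure required per variable and the combined listing required for completeness.
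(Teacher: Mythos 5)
Your proof is correct and takes essentially the same route as the paper's: for each variable $x_i$, the consecutive-occurrence pairs $(j_{i,l}, j_{i,l+1})$ form a matching order (you justify this by observing the graph is a path, which the paper simply calls ``clear''), and the canonical matching order contains all of these, satisfying Definition~\ref{completeMatchingOrderDefinition}.
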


\begin{proof}
For every $x_i \in \var(\alpha)$, let $\varpos{i}(\alpha) := \{j_{i, 1}, j_{i, 2}, \ldots, j_{i, n_i}\}$ with $n_i := |\alpha|_{x_i}$, $j_{i, l} < j_{i, l + 1}$, $1 \leq l \leq n_i - 1$. The tuple
\begin{equation*}
((j_{i, 1}, j_{i, 2}), (j_{i, 2}, j_{i, 3}), \ldots, (j_{i, n_i - 2}, j_{i, n_i - 1}), (j_{i, n_i - 1}, j_{i, n_i}))
\end{equation*}
is clearly a matching order for $x_i$ in $\alpha$. As the canonical matching order contains all these matching orders for each variable $x_i \in \var(\alpha)$, it is a complete matching order for $\alpha$. 
\end{proof}

Intuitively, the canonical matching order can be constructed by simply moving through the pattern from left to right and for each encountered occurrence of a variable $x$, this occurrence and the next occurrence of $x$ (if there is any) constitutes a matching position. For instance, the canonical matching order for the example pattern $\beta$ introduced in Section~\ref{sec:pattern} is $((1,3), (2,4), (4,6), (5,7))$.\par 
We proceed with the definition of a Janus operating mode that is derived from the canonical matching order. Before we do so, we informally explain how this is done. To this end, we employ the interpretation of Janus operating modes as instructions for input head movements. In each step of moving the input heads from one matching position to another, we want to move first the left head completely and then the right head. This is not a problem as long as the part the left head has to be moved over and the part the right head has to be moved over are not overlapping. However, if they are overlapping, then the left head would overtake the right head which conflicts with the definition of Janus operating modes. So in this special case, we first move the left head until is reaches the right head and then we move both heads simultaneously. As soon as the left head reaches the left element of the next matching position, we can keep on moving the right head until it reaches the right element of the next matching position.

\begin{definition}\label{canonicalJanusOperatingMode}
Let $(m_1, m_2, \ldots, m_k)$ be the canonical matching order for a ter\-minal-free pattern $\alpha$.
For any $m_{i - 1} := (j'_1, j'_2)$ and $m_i := (j_1, j_2)$, $2 \leq i \leq k$, let $(p_1, p_2, \ldots, p_{k_1}) := g(j'_1, j_1)$ and $(p'_1, p'_2, \ldots, p'_{k_2}) := g(j'_2, j_2)$, where $g$ is the function introduced in Definition~\ref{janusOperatingModeDefinition}. If $j_1 \leq j'_2$, then we define
\begin{equation*}
D_i := ((p_1, \lambda), (p_2, \lambda), \ldots, (p_{k_1}, \lambda), (p'_1, \rho), (p'_2, \rho), \ldots, (p'_{k_2}, \rho), (j_2, \rho), (j_1, \lambda))\enspace.
\end{equation*}
If, on the other hand, $j'_2 < j_1$, we define $D_i$ in three parts
\begin{align*}
D_i := (&(p_1, \lambda), (p_2, \lambda), \ldots, (j'_2, \lambda), \\
&(j'_2 + 1, \rho), (j'_2 + 1, \lambda), (j'_2 + 2, \rho), (j'_2 + 2, \lambda), \ldots, (j_1 - 1, \rho), (j_1 - 1, \lambda), \\
&(j_1, \rho), (j_1 + 1, \rho), \ldots, (j_2 - 1, \rho), (j_2, \rho), (j_1, \lambda))\enspace.
\end{align*}
Finally, $D_1 := ((1, \rho), (2, \rho), \ldots, (j - 1, \rho), (j, \rho), (1, \lambda))$, where $m_1 = (1, j)$. The tuple $(D_1$, $D_2$, $\ldots$, $D_k)$ is called the \emph{canonical Janus operating mode}.
\end{definition}
If we derive a Janus operating mode from the canonical matching order $((1,3)$, $(2,4)$, $(4,6)$, $(5,7))$ for $\beta$ as described in Definition~\ref{canonicalJanusOperatingMode} we obtain the canonical Janus operating mode 
$(((1, \rho)$, $(2, \rho)$, $(3, \rho)$, $(1, \lambda))$, $((4, \rho)$, $(2, \lambda))$, $((3, \lambda)$, $(5, \rho)$, $(6, \rho)$, $(4, \lambda))$, $((7, \rho)$, $(5, \lambda)))$.
This canonical Janus operating mode has a counter number of $1$, so its counter number is smaller than the counter number of the example Janus operating mode $\Delta_{\beta}$ given in Section~\ref{sec:pattern} and, furthermore, equals the variable distance of $\beta$. 
Referring to Theorem~\ref{cnGeqVd}, we conclude that the canonical Janus operating mode for $\beta$ is optimal. 
The next lemma shows that this holds for every pattern. 

\begin{lemma}\label{canonicalCnEqualsVd}
Let $\alpha$ be a terminal-free pattern and let $\Delta_{\alpha}$ be the canonical Janus operating mode for $\alpha$. Then $\cn(\Delta_{\alpha}) = \vd(\alpha)$.
\end{lemma}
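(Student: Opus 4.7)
By Theorem~\ref{cnGeqVd}, $\cn(\Delta_\alpha) \geq \vd(\alpha)$, so it suffices to establish $\cn(\Delta_\alpha) \leq \vd(\alpha)$. Writing the head movement indicator as $\overline{\Delta_\alpha} = ((d'_1, \mu'_1), \ldots, (d'_{k'}, \mu'_{k'}))$, I will show that the quantity $s_i$ of Definition~\ref{counterNumber} satisfies $s_i \leq \vd(\alpha)$ for every step $i$; the lemma then follows by taking the maximum over $i$.

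The central structural claim that drives the bound is: for each step $i$, there exist a variable $z \in \var(\alpha)$ and a pair $p < p'$ of consecutive occurrences of $z$ in $\alpha$ (so $y_p = y_{p'} = z$ and $|\alpha[p+1 \ldots p'-1]|_z = 0$) such that
\begin{equation*}
\{y_{d'_i}\} \cup \{x \neq y_{d'_i} : \exists j, j' \text{ with } 1 \leq j < i < j' \leq k', y_{d'_j} = y_{d'_{j'}} = x\} \subseteq \{z\} \cup \var(\alpha[p+1 \ldots p'-1]).
\end{equation*}
Given this inclusion, the definition of $\vd(\alpha)$ bounds $|\var(\alpha[p+1 \ldots p'-1])| \leq \vd(\alpha)$, and since $z \notin \var(\alpha[p+1 \ldots p'-1])$, the right-hand side has at most $\vd(\alpha) + 1$ elements. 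As $y_{d'_i}$ belongs to the left-hand side but is excluded from $s_i$, we conclude $s_i \leq \vd(\alpha)$.

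To establish the structural claim, I use Remark~\ref{inputHeadsPerspectiveRemark} to track the positions $L_i$ and $R_i$ of the left and right heads at step $i$, and exploit two facts from the canonical construction: the left elements of the canonical matching positions are strictly increasing (Definition~\ref{canonicalMatchingOrderDefinition}), so the left head sweeps monotonically rightwards; and within a single block $D_q$ the right head's excursion stays inside the interval $[\min(r_{q-1}, r_q), \max(r_{q-1}, r_q)]$, except in case~B of Definition~\ref{canonicalJanusOperatingMode}, where it visits $(r_{q-1}, r_q]$. At each step $i$ I identify the block $D_{q^*}$ and the sub-phase within it, and choose the pair $(p, p')$ either as a canonical matching position (typically $m_{q^*-1}$ or $m_{q^*}$, whichever straddles the current heads appropriately) or as a pair of consecutive occurrences of $y_{d'_i}$ itself. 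A routine verification in each sub-phase then checks that any variable parenthesizing step $i$ has an occurrence inside the gap of the chosen pair.

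The main obstacle is the verification for the more delicate sub-phases: in case~A, when the right head retraces a prefix of the previous matching interval ($r_{q^*} < r_{q^*-1}$), no single canonical pair straddles $[L_i, R_i]$; and in case~B, the marching phase in which both heads simultaneously advance requires a careful argument showing that all parenthesizing variables are confined to the gap of a suitably chosen consecutive pair. Once the case analysis is complete and the inclusion above is established at every step, taking the maximum yields $\cn(\Delta_\alpha) \leq \vd(\alpha)$, which combined with Theorem~\ref{cnGeqVd} proves the lemma.
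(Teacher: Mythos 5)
Your plan takes the contrapositive of the paper's argument but is conceptually the same: the paper supposes $s_p \geq \vd(\alpha)+1$ for some $p$, produces a matching position $(l,r)$ whose gap would then contain $\vd(\alpha)+1$ distinct variables different from $y_l$, and derives a contradiction; your structural claim is that inclusion, stated positively. So there is no genuinely different route here to report.

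The problem is that you have not proved the structural claim, and the parts you defer as ``routine verification'' and ``careful argument'' are precisely the substance of the lemma. Two issues stand out. First, your wrapping interval is wrong: you track $[L_i,R_i]$, the current head positions, but the parenthesizing variables at step $i$ can occur at positions visited by the right head earlier in the computation and not currently occupied. The relevant interval is from $L_i$ to $d'_{p_{\max}}$, the maximum position visited so far — this is exactly what the paper's $p_{\max}$ captures. With the correct interval, your claim that ``no single canonical pair straddles $[L_i,R_i]$'' in the retracing sub-phase of case~A evaporates: because the right head only ever sits on right elements of matching positions (it moves from one such $r$ to the next), there is always a matching position $(l,r)$ with $l\leq L_i$ and $r\geq d'_{p_{\max}}$, and one never needs to resort to ``a pair of consecutive occurrences of $y_{d'_i}$ itself,'' a device the paper does not use. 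Second, and more fundamentally, you do not address why the parenthesizing variables must actually occur inside the gap of the chosen pair. A variable $x$ parenthesizing step $i$ has occurrences at some $d'_j$ with $j<i$ and $d'_{j'}$ with $j'>i$, and those positions in $\alpha$ are a priori unconstrained. The paper's argument that $\Gamma\subseteq\var(\gamma\cdot y_{d'_{p_{\max}}})$ — relying on the monotonicity of the left head (left elements of the canonical matching order are increasing, so the left head never revisits a position once passed) and on the fact that nothing beyond $d'_{p_{\max}}$ has been touched — is the crux, followed by delicate bookkeeping when $y_{d'_{p_{\max}}}\in\Gamma$ or $l=d'_p$ to still extract $\vd(\alpha)+1$ variables distinct from $y_l$ in the gap. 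The paper also treats separately the initial phase where the left head has not yet entered $\alpha$. None of this is in your proposal, so what you have is a correct outline with the load-bearing argument absent.
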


\begin{proof}
Let $\alpha := y_1 \cdot y_2 \cdot \ldots \cdot y_n$ and let $\overline{\Delta_{\alpha}} := ((d'_1, \mu'_1), (d'_2, \mu'_2), \ldots, (d'_{k'}, \mu'_{k'}))$ be the head movement indicator of the canonical Janus operating mode. This sequence $\overline{\Delta_{\alpha}}$ contains numbers $d'_i$, $1 \leq i \leq k'$, that are positions of $\alpha$, i.\,e.\ $1 \leq d'_i \leq |\alpha|$, $1 \leq i \leq k'$. Hence, we can associate a sequence of variables $(y_{d'_1}, y_{d'_2}, \ldots, y_{d'_{k'}})$ with $\overline{\Delta_{\alpha}}$.\par
In order to prove Lemma~\ref{canonicalCnEqualsVd}, we assume to the contrary that $\cn(\Delta_{\alpha}) > \vd(\alpha)$. This implies that there is a $p$, $1 \leq p \leq k'$, and a set $\Gamma$ of at least $\pi := \vd(\alpha) + 1$ different variables $z_1, z_2, \ldots, z_{\pi}$ such that $y_{d'_p} \notin \Gamma$ and, for every $z \in \Gamma$, there exist $j, j'$, $1 \leq j < p < j' \leq k'$, with $y_{d'_j} = y_{d'_{j'}} = z$.\par
We can interpret $\overline{\Delta_{\alpha}}$ as a sequence of input head movements over the pattern $\alpha$ as explained in Remark~\ref{inputHeadsPerspectiveRemark}. We are particularly interested in the position of the \emph{left} head in $\alpha$ at step $p$ of $\overline{\Delta_{\alpha}}$. Thus, we define $\widehat{p}$ such that $d'_{\widehat{p}} = \max\{d'_j \mid 1 \leq j \leq p, \mu'_j = \lambda\}$. However, we note that $\{d'_j \mid 1 \leq j \leq p, \mu'_j = \lambda\} = \emptyset$ is possible and in this case $d'_{\widehat{p}}$ would be undefined. So for now, we assume that $\{d'_j \mid 1 \leq j \leq p, \mu'_j = \lambda\} \neq \emptyset$ and consider the other case at the end of this proof. Moreover, we need to define the rightmost position in $\alpha$ that has been visited by any input head when we reach step $p$ in $\overline{\Delta_{\alpha}}$. By definition of the canonical matching order, this has to be the right input head, as it is always positioned to the right of the left input head. Thus, we define $p_{\max}$ such that $d'_{p_{\max}} := \max\{d'_j \mid 1 \leq j \leq p\}$.\par
Now, we can consider $\alpha$ in the factorisation
\begin{equation*}
\alpha = \beta \cdot y_{d'_{\widehat{p}}} \cdot \gamma \cdot y_{d'_{p_{\max}}} \cdot \delta\enspace.
\end{equation*}
By definition of the positions $\widehat{p}$ and $p_{\max}$ above, we can conclude the following. After performing all steps $d'_{j}$ with $1 \leq j \leq p$, position $d'_{\widehat{p}}$ is the position where the left head is located right now. This implies, by definition of the canonical Janus operating mode, that no head will be moved to one of the positions in $\beta$ again. The position $d'_{p_{\max}}$ is the rightmost position visited by any head so far. Hence, until now, no head has reached a position in $\delta$. \par
Regarding the sequence of variables $(y_{d'_1}, y_{d'_2}, \ldots, y_{d'_{k'}})$ we can observe that for every $j$, $1 \leq j \leq p$, $y_{d'_j} \in \var(\beta \cdot y_{d'_{\widehat{p}}} \cdot \gamma \cdot y_{d'_{p_{\max}}})$, and, for every $j'$, $p < j' \leq k'$, $y_{d'_{j'}} \in \var(\gamma \cdot y_{d'_{p_{\max}}} \cdot \delta)$. This follows directly from our interpretation of $\overline{\Delta_{\alpha}}$ as a sequence of input head movements over $\alpha$. 
Moreover, since for every $z \in \Gamma$, there exist $j, j'$, $1 \leq j < p < j' \leq k'$, with $y_{d'_j} = y_{d'_{j'}} = z$, we can conclude that $\Gamma \subseteq (\var(\beta \cdot y_{d'_{\widehat{p}}} \cdot \gamma \cdot y_{d'_{p_{\max}}}) \cap \var(\gamma \cdot y_{d'_{p_{\max}}} \cdot \delta))$. We can further show that $\Gamma \subseteq \var(\gamma \cdot y_{d'_{p_{\max}}})$. To this end, we assume that for some $z \in \Gamma$, $z \notin \var(\gamma \cdot y_{d'_{p_{\max}}})$, which implies $z \in (\var(\beta \cdot y_{d_{\widehat{p}}}) \cap \var(\delta))$. Hence, we can conclude that there exists a matching position $(l_z, r_z)$ in the canonical matching order, where the left element $l_z$ is a position in $\beta \cdot y_{d_{\widehat{p}}}$ and the right element $r_z$ is a position in $\delta$, i.\,e., $1 \leq l_z \leq |\beta \cdot y_{d_{\widehat{p}}}|$ and $|\beta \cdot y_{d'_{\widehat{p}}} \cdot \gamma \cdot y_{d'_{p_{\max}}}| + 1 \leq r_z \leq |\alpha|$. By definition of the canonical Janus operating mode, this implies that the rightmost position in $\alpha$, that has been visited by any input head when we reached step $p$ in $\overline{\Delta_{\alpha}}$ has to be at least position $r_z$. Since $r_z > d'_{p_{\max}}$, this is clearly a contradiction. Consequently, we conclude that $\Gamma \subseteq \var(\gamma \cdot y_{d'_{p_{\max}}})$.\par
We recall that position $d'_{p_{\max}}$ of $\alpha$ has already been reached by the right head and that in the canonical Janus operating mode, the right head is exclusively moved from the right element of some matching position $(l, r)$ to the right element of another matching position $(l', r')$. Consequently, either $r \leq d'_{p_{\max}} \leq r'$ or $r' \leq d'_{p_{\max}} \leq r$ and, furthermore, the left elements $l$ and $l'$ must be positions in the factor $\beta \cdot y_{d'_{\widehat{p}}}$. Thus, there has to be a matching position $(l, r)$ in the canonical matching order with $l \leq d'_{\widehat{p}}$ and $r \geq d'_{p_{\max}}$. Therefore, we can refine the factorisation from above by factorising $\beta \cdot y_{d'_{\widehat{p}}}$ into $\beta_1 \cdot y_l \cdot \beta_2$ and $y_{d'_{p_{\max}}} \cdot \delta$ into $\delta_1 \cdot y_r \cdot \delta_2$; thus, we obtain
\begin{equation*}
\alpha = \beta_1 \cdot y_l \cdot \beta_2 \cdot \gamma \cdot \delta_1 \cdot y_r \cdot \delta_2\enspace.
\end{equation*}
In the following, we show that the factor between the left and right element of the matching position $(l, r)$, i.\,e., $\beta_2 \cdot \gamma \cdot \delta_1$, contains too many distinct variables different from $y_l = y_{r}$. More precisely, the number of such variables is clearly bounded by the variable distance, but, by means of the variables in $\Gamma$, we obtain a contradiction by showing that there are $\vd(\alpha) + 1$ such variables in the factor $\beta_2 \cdot \gamma \cdot \delta_1$. To this end, we first recall that we have already established that $\Gamma \subseteq \var(\gamma \cdot y_{d'_{p_{\max}}})$ and, furthermore,  $y_{d'_p} \notin \Gamma$ and $(l, r)$ is a matching position; thus, $y_l = y_r$. \par
By the factorisation above, we know that $d'_{p_{\max}} \leq r$. If $d'_{p_{\max}} < r$, then $\Gamma \subseteq \var(\gamma \cdot y_{d'_{p_{\max}}})$ implies $\Gamma \subseteq \var(\gamma \cdot \delta_1)$. We can further note, that $y_r$ cannot be an element of $\Gamma$ as this contradicts to the fact that $(l, r)$ is a matching position. Thus, we have $|\Gamma|$ variables different from $y_l = y_r$ occurring in $\beta_2 \cdot \gamma \cdot \delta_1$ and we obtain the contradiction as described above. \par
In the following, we assume that $d'_{p_{\max}} = r$ and note that this implies $\delta_1 = \varepsilon$. We observe that there are two cases depending on whether or not $y_{d'_{p_{\max}}} \in \Gamma$. We start with the easy case, namely $y_{d'_{p_{\max}}} \notin \Gamma$, and note that in this case $\Gamma \subseteq \var(\gamma \cdot y_{d'_{p_{\max}}})$ implies $\Gamma \subseteq \var(\gamma)$. In the same way as before, this leads to 
a contradiction.\par
It remains to consider the case that $y_{d'_{p_{\max}}} \in \Gamma$. Here, $\Gamma \subseteq \var(\gamma)$ is not satisfied anymore, as $(l, d'_{p_{\max}})$ is a matching position (recall that we still assume  $d'_{p_{\max}} = r$) and, thus, $y_{d'_{p_{\max}}} \notin \var(\gamma)$. In the following we consider the variable $y_{d'_p}$, for which, by definition, $y_{d'_p} \notin \Gamma$ is satisfied. Hence, in order to obtain a contradiction, it is sufficient to show that $y_{d'_p} \in \var(\beta_2 \cdot \gamma \cdot \delta_1)$. To this end, we need the following claim:\par\bigskip\noindent
\emph{Claim} $l \leq d'_{p}$. \par\medskip\noindent
\emph{Proof (Claim).}  If $\mu'_{p} = \lambda$, then, by definition, $d'_{\widehat{p}} = d'_{p}$ and if $\mu'_{p} = \rho$, then $d'_{\widehat{p}} < d'_{p}$, since $\widehat{p}$ is the position of the left head and $d'_{p}$ is the position of the right head. Hence, since $l \leq d'_{\widehat{p}}$, we conclude $l \leq d'_{\widehat{p}} \leq d'_{p}$. \hfill \emph{q.e.d. (Claim)} \par\bigskip
If $l < d'_{p}$, then $y_{d'_p} \in \var(\beta_2 \cdot \gamma \cdot \delta_1)$, since $y_{d'_p} = y_{d'_{p_{\max}}}$ is not possible as, by assumption, $y_{d'_{p_{\max}}} \in \Gamma$ and $y_{d'_p} \notin \Gamma$. Hence, we assume $l = d'_{p}$, which implies $y_l = y_{d'_{p}}$. We can show that this is a contradiction. First, we recall that $(l, d'_{p_{\max}})$ is a matching position, so $y_l = y_{d'_{p_{\max}}}$ and since $y_{d'_{p_{\max}}} \in \Gamma$, $y_{l} \in \Gamma$ as well. Furthermore, $y_{d'_{p}} \notin \Gamma$, which contradicts $y_l = y_{d'_{p}}$. We conclude that $y_{d'_p} \in \var(\beta_2 \cdot \gamma \cdot \delta_1)$ must be satisfied.\par
Hence, for each possible case, we obtain $|\var(\beta_2 \cdot \gamma \cdot \delta_1)| \geq \pi$, which is a contradiction.\par
It still remains to consider the case $\{d'_j \mid 1 \leq j \leq p, \mu'_j = \lambda\} = \emptyset$. In this case we have $\mu'_i = \rho$ for every $i$ with $1 \leq i \leq p$. This implies that until now the left input head has not yet entered $\alpha$ and the right head has been moved directly from the first position of $\alpha$ to position $d'_p$ without reversing direction. Furthermore, we know that the first matching position of the canonical matching order is $(1, r)$, where $d'_p \leq r$.\par
If $d'_p = r$, we can factorise $\alpha$ into 
\begin{equation*}
\alpha = y_1 \cdot \beta \cdot y_{d'_p} \cdot \gamma\,,
\end{equation*}
where $(1, d'_p)$ is a matching position. As for every $z \in \Gamma$ there exists an $i$, $1 \leq i < p$, with $y_{d_i} = z$ and since $y_{d'_p} \notin \Gamma$, we conclude $\Gamma \subseteq \var(\beta)$. This directly implies $\vd(\alpha) \geq \pi$, which is a contradiction.\par
If, on the other hand, $d'_p < r$, then we can factorise $\alpha$ into
\begin{equation*}
\alpha = y_1 \cdot \beta_1 \cdot y_{d'_p} \cdot \beta_2 \cdot y_r \cdot \gamma\,.
\end{equation*}
In the same way as before, we can conclude that $\Gamma \subseteq \var(y_1 \cdot \beta_1)$, thus, $(\Gamma \slash \{y_1\}) \subseteq \var(\beta_1)$. Now, as $y_{d'_p} \notin \Gamma$, we have $(\Gamma \slash \{y_1\}) \cup \{y_{d'_p}\} \subseteq \var(\beta_1 \cdot y_{d'_p} \cdot \beta_2)$, where $|(\Gamma \slash \{y_1\}) \cup \{y_{d'_p}\}| = \pi$ and, since $(1,r)$ is a matching position, $\vd(\alpha) \geq \pi$ follows, which is a contradiction. This concludes the proof of Lemma~\ref{canonicalCnEqualsVd}.
\end{proof}

The above lemma, in conjunction with Theorems~\ref{janusConstructionTheorem}~and~\ref{cnGeqVd}, shows that the canonical Janus operating mode for a pattern $\alpha$ can be transformed into a Janus automaton that is optimal with respect to the number of counters. We subsume this first main result in the following theorem:

\begin{theorem}\label{vdCorollary}
Let $\alpha$ be a terminal-free pattern. There exists a $\jfa(\vd(\alpha) + 1)$ $M$ such that $L(M) = L(\alpha)$.
\end{theorem}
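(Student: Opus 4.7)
The plan is to observe that this theorem is essentially a corollary of the three preceding results, and that no new argument is required. The pieces have already been put in place: Definition~\ref{canonicalJanusOperatingMode} gives an explicit construction of the canonical Janus operating mode $\Delta_\alpha$ for any terminal-free pattern $\alpha$; Lemma~\ref{canonicalCnEqualsVd} establishes that this particular operating mode satisfies $\cn(\Delta_\alpha) = \vd(\alpha)$; and Theorem~\ref{janusConstructionTheorem} shows how to transform any Janus operating mode $\Delta_\alpha$ into a $\jfa(\cn(\Delta_\alpha) + 1)$ that recognises $L(\alpha)$.

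Thus my plan is, in one line: given $\alpha$, construct the canonical Janus operating mode $\Delta_\alpha$ from Definition~\ref{canonicalJanusOperatingMode}, apply Lemma~\ref{canonicalCnEqualsVd} to conclude $\cn(\Delta_\alpha) = \vd(\alpha)$, and then invoke Theorem~\ref{janusConstructionTheorem} to obtain a $\jfa(\vd(\alpha) + 1)$ automaton $M$ with $L(M) = L(\alpha)$. It is worth also remarking (for context, not for the proof itself) that, by Theorem~\ref{cnGeqVd}, the bound $\vd(\alpha) + 1$ produced by this pipeline is best possible within the Janus-operating-mode framework, since every Janus operating mode for $\alpha$ yields at least $\vd(\alpha) + 1$ counters.

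There is no real obstacle here: the genuinely hard step was Lemma~\ref{canonicalCnEqualsVd}, where one had to trace the head movements prescribed by the canonical operating mode carefully and rule out that any ``parenthesising'' of $\vd(\alpha) + 1$ different variables around some position can occur in the induced head-movement pattern $D_\alpha$. The present theorem is simply the packaging of that lemma into the automaton language via Theorem~\ref{janusConstructionTheorem}, so the proof amounts to writing a single chain of implications.
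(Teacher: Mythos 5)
Your proposal is correct and matches the paper exactly: the paper likewise presents this theorem as an immediate consequence of Definition~\ref{canonicalJanusOperatingMode}, Lemma~\ref{canonicalCnEqualsVd}, and Theorem~\ref{janusConstructionTheorem}, with Theorem~\ref{cnGeqVd} invoked only for the accompanying optimality remark. No gap.
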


The Janus automaton obtained from the canonical Janus operating mode for a pattern $\alpha$ (in the way it is done in the proof of Theorem~\ref{janusConstructionTheorem}) is called the \emph{canonical Janus automaton}. As already stated above, Theorem~\ref{vdCorollary} shows the optimality of the canonical automaton. However, this optimality is subject to a vital assumption: we assume that the automaton needs to know the length of a factor in order to move an input head over this factor. Although this assumption is quite natural, we shall reconsider it in more detail in Section~\ref{sec:conclusion}.\par
As stated in Section~\ref{sec:intro}, the variable distance is the crucial parameter when constructing canonical Janus automata for pattern languages. We obtain a polynomial time match test for any class of patterns with a restricted variable distance:
\begin{theorem}\label{mainComplexityResult}
There is a computable function that, given any terminal-free pattern $\alpha$ and $w \in \Sigma^*$, decides on whether $w \in L(\alpha)$ in time $\landau(|\alpha|^3\,|w|^{(\vd(\alpha) + 4)})$.
\end{theorem}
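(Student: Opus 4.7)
The plan is to obtain a polynomial-time matching algorithm by explicitly constructing the canonical Janus automaton for $\alpha$ and then solving reachability in its configuration graph.

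First, given $\alpha$, I would compute $\vd(\alpha)$ via Proposition~\ref{vdEfficiencyProp} and, following the construction in the proof of Theorem~\ref{janusConstructionTheorem} that underlies Theorem~\ref{vdCorollary}, build the canonical Janus automaton $M$ with $\pi := \vd(\alpha)+1$ counters satisfying $L(M) = L(\alpha)$. Inspection of that construction shows that $M$ has $O(|\alpha|^2)$ states (indexed by matching positions and positions of $\alpha$) and that one-step evaluation of its transition function $\delta$ costs $O(|\alpha|)$. The membership problem for $L(\alpha)$ then reduces to deciding, on input $\cent w \$$, whether the initial configuration $\widehat{c}_0$ can reach a final configuration under $\vdash^*_{M,\,w}$, which is a reachability question on the directed graph whose vertices are the configurations of $M$ on $w$.

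The naive bound on the number of such configurations given by Definition~\ref{computationDefinition} is $|Q| \cdot O(|w|^2) \cdot O(|w|^2)^{\pi}$, which yields an overly weak $|w|^{2\vd(\alpha)+4}$ in the $|w|$-exponent and is therefore useless. The decisive refinement I intend to exploit is the following invariant of the canonical construction: at every reachable configuration, at most one counter of $M$ holds a non-zero value. This should follow from a case analysis of the transitions defined in the proof of Theorem~\ref{janusConstructionTheorem}, since a counter is incremented only while an input head is being moved across a single factor and, as soon as the counter hits its bound, the immediately subsequent transition performs one additional increment that wraps the value back to $0$ modulo $C_j+1$ before the automaton can touch any other counter. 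Under this invariant, a reachable configuration is fully described by its state ($O(|\alpha|^2)$ choices), the pair of head positions ($O(|w|^2)$ choices), the single value of the currently active counter ($O(|w|)$ choices) and the bounds of the $\pi$ counters ($O(|w|^{\vd(\alpha)+1})$ choices), giving a total of $O(|\alpha|^2 \cdot |w|^{\vd(\alpha)+4})$ reachable configurations.

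Finally, a standard breadth-first search from $\widehat{c}_0$ on this reduced configuration space decides reachability. With the $O(|\alpha|)$ evaluation cost of $\delta$ and a careful amortized accounting of the at-most-$|w|+1$ nondeterministic successors at a counter-reset step, this yields the claimed $O(|\alpha|^3 \cdot |w|^{\vd(\alpha)+4})$ bound. The main technical obstacle I expect is the rigorous verification of the single-active-counter invariant: although intuitively clear from the design of $M$, it requires a careful case-by-case check against all transition clauses introduced in the proof of Theorem~\ref{janusConstructionTheorem}; beyond that bookkeeping, no further conceptual difficulties should arise.
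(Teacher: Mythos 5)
Your proposal mirrors the paper's proof: both build the canonical Janus automaton via Theorem~\ref{janusConstructionTheorem}, invoke the observation that in any reachable configuration at most one counter value is non-zero, bound the reachable configuration space by $\landau(|\alpha|^3\,|w|^{\vd(\alpha)+4})$, and decide acceptance by a graph search over configurations (you use BFS, the paper uses DFS, which is inconsequential). One small bookkeeping slip worth fixing in a complete write-up: the canonical automaton's first transition resets \emph{all} $\pi=\vd(\alpha)+1$ counters simultaneously, so the initial configuration has $\landau(|w|^{\pi})$ successors rather than the ``at most $|w|+1$'' you state for reset steps; since $|w|^{\pi}$ is dominated by $|\alpha|^3\,|w|^{\vd(\alpha)+4}$ the claimed bound is unaffected, and the paper accounts for this case explicitly in its edge-count estimate.
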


\begin{proof}
We present an algorithm solving the membership problem for terminal-free pattern languages within the time bound claimed in Theorem~\ref{mainComplexityResult}. Our algorithm, on input $\alpha$ and $w$, simply constructs the canonical Janus automaton $M$ for $\alpha$ and then solves the acceptance problem for $M$ on input $w$. As $L(M) = L(\alpha)$, this algorithm clearly works correctly.\par
Regarding the time complexity we have to investigate two aspects: Firstly, the time complexity of transforming $\alpha$ into the canonical Janus automaton $M$ and, secondly, the time complexity of solving the acceptance problem for $M$ on input $w$. To simplify the estimations of time complexities, we define $n := |w|$. In the strict sense, the input has length $|w| + 2$ and there are $|w| + 1$ possible counter bounds to guess, but as we shall use the Landau notation, $n$ is sufficiently accurate for the following analysis.\par
We begin with transforming $\alpha := y_1 \cdot y_2 \cdot \ldots \cdot y_{n'}$ into $M$. To this end, we construct the canonical matching order $(m_1, m_2, \ldots, m_k)$, which can be obtained from $\alpha$ in time $\landau(|\alpha|)$. Definition~\ref{canonicalJanusOperatingMode} shows that the canonical Janus operating mode $\Delta_{\alpha} := (D_1, \ldots, D_k)$ can be directly constructed from the canonical matching order and the time complexity required to do so is merely the size of $\Delta_{\alpha}$. Obviously, every $D_i$, $1 \leq i \leq k$, has $\landau(|\alpha|)$ elements and $k \leq |\alpha|$. Thus, we conclude that $\Delta_{\alpha}$ can be constructed in $\landau(|\alpha|^2)$. Let $\overline{\Delta_{\alpha}} = ((d'_1, \mu'_1), (d'_2, \mu'_2), \ldots, (d'_{k'}, \mu'_{k'}))$ be the head movement indicator of $\Delta_{\alpha}$, and let $D_{\alpha} := y_{d'_1} \cdot y_{d'_2} \cdot \ldots \cdot y_{d'_{k'}}$, where, as described above, $k' \leq |\alpha|^2$. Next, we have to construct a mapping $\co : \var(\alpha) \rightarrow  \{1, \ldots, \vd(\alpha) + 1\}$ with the required properties described in the proof of Theorem~\ref{janusConstructionTheorem}, i.\,e., if, for some $z, z' \in \var(\alpha)$, $z \neq z'$, $D_{\alpha}$ can be factorised into $D_{\alpha} = \beta \cdot z \cdot \gamma \cdot z' \cdot \gamma' \cdot z \cdot \delta$, then $\co(z) \neq \co(z')$. Such a mapping can be constructed in the following way. Assume that it is possible to mark counters either as free or as occupied. We move over the pattern $y_{d_1} \cdot y_{d_2} \cdot \ldots \cdot y_{d_{k'}}$ from left to right and whenever a variable $x_i$ is encountered for the first time, we set $\co(x_i) := j$ for some counter $j$ that is not occupied right now and then mark this counter $j$ as occupied. Whenever a variable $x_i$ is encountered for the last time, counter $\co(x_i)$ is marked as free. As we have to move over $\overline{\Delta_{\alpha}}$ in order to construct $\co$ in this way, time $\landau(k') = \landau(|\alpha|^2)$ is sufficient. We note that this method can be applied as it is not possible that there are more than $\cn(\Delta_{\alpha}) + 1 = \vd(\alpha) + 1$ variables such that for all $z, z'$, $z \neq z'$ of them, $D_{\alpha}$ can be factorised into $D_{\alpha} = \beta \cdot z \cdot \gamma \cdot z' \cdot \gamma' \cdot z \cdot \delta$ or $D_{\alpha} = \beta \cdot z' \cdot \gamma \cdot z \cdot \gamma' \cdot z' \cdot \delta$. This can be shown in the same way as we have already done in the proof of Theorem~\ref{janusConstructionTheorem}.\par
Next we transform each $D_p$, $1 \leq p \leq k$, into a part of the automaton $M$, following the construction in the proof of Theorem~\ref{janusConstructionTheorem}. For the remainder of this proof, we define $\pi := \vd(\alpha) + 1$. We show how many states are needed to implement an arbitrary $D_p$ with $p \geq 2$. Therefore, we define 
\begin{equation*}
D_p := ((j_1, \mu_1), (j_2, \mu_2), \ldots, (j_{k''}, \mu_{k''}), (j_r, \rho), (j_l, \lambda))
\end{equation*}
with $\mu_q \in \{\lambda, \rho\}$, $1 \leq q \leq k''$, and the tuples $(j'_r, \rho)$, $(j'_l, \lambda)$ to be the last two elements of $D_{p - 1}$. We need the following sets of states.
\begin{align*}
&Q_{p, l} := \begin{cases}
\{\lforth_{p,q} \mid 1 \leq q \leq k'', \mu_q = \lambda \}& \mbox{if $j'_l < j_l$}\enspace,\\
\{\lback_{p,q} \mid 1 \leq q \leq k'', \mu_q = \lambda \}& \mbox{else}\enspace.
\end{cases}\\
&Q_{p, r} := \begin{cases}
\{\rforth_{p,q} \mid 1 \leq q \leq k'', \mu_q = \rho \}& \mbox{if $j'_r < j_r$}\enspace,\\
\{\rback_{p,q} \mid 1 \leq q \leq k'', \mu_q = \rho \}& \mbox{else}\enspace.
\end{cases}\\
&Q_p := Q_{p, l} \cup Q_{p, r} \cup \{\match_p\}\enspace.
\end{align*}
The set $Q_1$ is defined analogously, with the only difference that only forth-states are needed. Clearly, $|Q_p| = k'' + 1 = \landau(|\alpha|)$, $1 \leq p \leq k$. So as $k = \sum_{i = 1}^{|\var(\alpha)|} (|\alpha|_{x_i} - 1) = |\alpha| - |\var(\alpha)| \leq |\alpha|$, we can conclude that $|Q| = \landau(|\alpha|^2)$, where $Q := \bigcup_{i = 1}^{k} Q_i$. For each element $y$ in $(|Q| \times \{0, 1, \ldots, n + 1\}^2 \times \{\mathtt{t_{=}}, \mathtt{t_{<}}\}^{\pi})$ we need to define $\delta(y)$, so $\delta$ can be constructed in time $\landau(|\alpha|^2\,n^2\,2^\pi)$.
This shows that the automaton $M$ can be constructed in time $\landau(|\alpha|^2\,n^2\,2^\pi)$.\par
Next we shall investigate the time complexity of solving the acceptance problem for $M$ on input $w$. We apply the following idea. We construct a directed graph of possible configurations of $M$ as vertices, connected by an edge if and only if it is possible to get from one configuration to the other by applying the transition function $\delta$. Then we search this graph for a path leading from the initial configuration to a final configuration, i.\,e., an accepting path. For an arbitrary vertex $v$, we denote the number of edges starting at $v$ by \emph{outdegree of $v$} and the number of edges ending at $v$ by \emph{indegree of $v$}. The nondeterminism of the computation of $M$ is represented by the fact that there are vertices with outdegree greater than $1$, namely those configurations where a new counter bound is guessed. So the existence of an accepting path is a sufficient and necessary criterion for the acceptance of the input word $w$. Searching this graph for an accepting path leads to a \emph{deterministic} algorithm correctly solving the acceptance problem for $M$. Let $(V, E)$ be this graph. The problem of finding an accepting path can then be solved in time $\landau(|V| + |E|)$. We illustrate this idea more formally and define the set of vertices, i.\,e., the set of all possible configurations of $M$ on input $w$:
\begin{align*}
\widehat{C}'_{M, w} := \{(q, h_1, h_2, (c_1, C_1), \ldots, (c_{\pi}, C_{\pi}))~|~&q \in Q, 0 \leq h_1 \leq h_2 \leq n + 1,\\
&0 \leq c_i \leq C_i \leq n, 1 \leq i \leq \pi\}\enspace.
\end{align*}
Now we obtain $\widehat{C}_{M, w}$ by simply deleting all the configurations of $\widehat{C}'_{M, w}$ that cannot be reached in any computation of $M$ on input $w$. How this can be done shall be explained at the end of the proof. Furthermore, we define a set of edges $\widehat{E}_{M, w}$, connecting the configurations in $\widehat{C}_{M, w}$ as follows: for all $\widehat{c}_1, \widehat{c}_2 \in \widehat{C}_{M, w}$, $(\widehat{c}_1, \widehat{c}_2) \in \widehat{E}_{M, w}$ if and only if $\widehat{c}_1 \vdash_{M, w} \widehat{c}_2$. We call $\widehat{G}_{M, w} := (\widehat{C}_{M, w}, \widehat{E}_{M, w})$ the \emph{full computation graph of $M$ on input $w$}. To analyse the time complexity of searching $\widehat{G}_{M, w}$ for an accepting path, we have to determine the size of $\widehat{C}_{M, w}$ and $\widehat{E}_{M, w}$. By the construction given in the proof of Theorem~\ref{janusConstructionTheorem}, for all configurations $(q, h_1, h_2, (c_1, C_1), \ldots, (c_{\pi}, C_{\pi})) \in \widehat{C}_{M, w}$, there is at most one $i$, $1 \leq i \leq \pi$, with $c_i \geq 1$. That is due to the fact that when $M$ increments a counter, then this counter is incremented until the counter value jumps back to $0$ again before another counter is incremented. Thus, for each $i$, $1 \leq i \leq \pi$, there are $|Q|\,n^{\pi + 3}$, possible configurations $(q, h_1, h_2, (c_1, C_1), \ldots, (c_{\pi}, C_{\pi}))$ such that $c_i \geq 1$. Therefore, we obtain
\begin{equation*}
|\widehat{C}_{M, w}| = \landau(|Q|\,\pi\,n^{\pi + 3}) = \landau(|\alpha|^2\,(\vd(\alpha) + 1)\,n^{\pi + 3}) = \landau(|\alpha|^3\,n^{\pi + 3})\enspace.
\end{equation*}
Next, we analyse the number of edges in $\widehat{G}_{M, w}$. As already mentioned, due to the nondeterminism of Janus automata, there are vertices in $\widehat{G}_{M, w}$ with an outdegree greater than one. One such vertex is the initial configuration, as in the initial configuration, all $\pi$ counters are reset. Thus, the initial configuration has outdegree of $\landau(n^{\pi})$. Furthermore, if $M$ resets a counter by changing from one configuration $\widehat{c_1}$ to another configuration $\widehat{c_2}$, then $\widehat{c_1}$ has outdegree greater than one.
However, there is at most one counter reset by changing from one configuration to another, so, for these configurations, the outdegree is bounded by $n$. We know that $M$ has $|\var(\alpha)|$ states such that a counter is reset in this state and, furthermore, if a counter is reset, all counter values are $0$. Hence the number of configurations with outdegree $n$ is $\landau(|\var(\alpha)|\,n^{\pi + 2})$ and so we count $\landau(|\var(\alpha)|\,n^{\pi + 3})$ edges for these configurations. Finally, all the other vertices not considered so far have outdegree $1$, and, as the complete number of vertices is $\landau(|\alpha|^3\,n^{\pi + 3})$, we can conclude that the number of vertices with outdegree $1$ does not exceed $\landau(|\alpha|^3\,n^{\pi + 3})$. We obtain
\begin{equation*}
|\widehat{E}_{M, w}| = \landau(n^{\pi} + |\var(\alpha)|\,n^{\pi + 3} + |\alpha|^3\,n^{\pi + 3}) = \landau(|\alpha|^3\,n^{\pi + 3})\enspace.
\end{equation*}
Consequently, $\landau(|\widehat{C}_{M, w}| + |\widehat{E}_{M, w}|) = \landau(|\alpha|^3\,n^{\pi + 3})$ and, as $\pi = \vd(\alpha) + 1$, $\landau(|\widehat{C}_{M, w}| + |\widehat{E}_{M, w}|) = \landau(|\alpha|^3\,n^{\vd(\alpha) + 4})$. However, it remains to explain how exactly we can search the graph for an accepting path. This can be done in the following way. We start with the initial configuration of $M$ on input $w$ and then we construct the graph $\widehat{G}_{M, w}$ step by step by using a Depth-First-Search approach. By this method an accepting configuration is found if there exists one and, furthermore, we do not need to construct the whole set of configurations $\widehat{C}'_{M, w}$ first. This concludes the proof.
\end{proof}

This main result also holds for more general classes of extended regular expressions, e.\,g., those containing terminal symbols (see our example in Section~\ref{sec:janus}) or imposing regular restrictions to the sets of words variables can be substituted with, i.\,e., for every variable $x \in \var(\alpha)$ a regular language $R_{x}$ is given and the pattern describes then the set of all words $w$ that can be obtained from $\alpha$ by substituting every $x \in \var(\alpha)$ by some word in $R_x$. We anticipate, though, that the necessary amendments to our definitions involve some technical hassle.

\section{Conclusions}\label{sec:conclusion}

In the present work, we have studied an important NP-complete problem, namely the match test for extended regular expressions. We have pointed out that the match test shows the same characteristics as the membership problem for terminal-free pattern languages, and therefore we have restricted our technical considerations to the latter problem, which can be defined in a more concise manner. We have introduced the concept of the variable distance of a pattern, and our studies have revealed that the complexity of the membership problem is essentially determined by this subtle combinatorial property. Any restriction of this parameter has yielded major classes of pattern languages (and, hence, of extended regular expressions) with a polynomial-time match test. \par
We have also been able to prove our approach to be optimal. However, this optimality is subject to the following vital assumption. We assumed that a Janus automaton needs to know the length of a factor in order to move an input head over this factor and, thus, needs to store this length in form of a counter bound. Although this assumption is quite natural, it might be worthwhile to consider possibilities to abandon it. For instance, a Janus automaton is able to detect the left and right end of its input by means of the endmarkers. Therefore, it can move an input head from any position to either end of the input without using any counter. So if an input head has to be moved from one position to another, there are three ways of doing this. We can either move it directly over the intermediate factors (how it is done in the original definition of Janus operating modes) or we can move it first to either the left or the right endmarker and then from there to the new position. In the latter two cases, only the information of the lengths of the factors between the left endmarker or the right endmarker and the target position are required. It is straightforward to extend the definition of Janus operating modes in accordance with these new ideas. Furthermore, we could again use the concept of the counter number of Janus operating modes and transform these refined Janus operating modes into Janus automata in a similar way as done in the proof of Theorem~\ref{janusConstructionTheorem}. The following example points out that, using this new approach, we can find Janus automata with less counters than the canonical Janus automata.

\begin{example}
Let $\alpha := x_1 \cdot x_2 \cdot x_3 \cdot x_1 \cdot x_2 \cdot x_4 \cdot x_4 \cdot x_5 \cdot x_5 \cdot x_3$. Clearly, $\vd(\alpha) = 4$, thus the canonical Janus automaton for $\alpha$ needs $5$ counters. We observe that there exists a $\jfa(4)$ $M$ with $L(M) = L(\alpha)$. This automaton $M$ matches factors according to the complete matching order $((1,4),(2,5),(6,7),(8,9),(3,10))$. The trick is that after matching the factors related to the matching position $(6,7)$, i.\,e., the factors corresponding to the occurrences of $x_4$, the counter responsible for factors corresponding to $x_4$ is reused to match the factors related to the matching position $(8,9)$. Hence, so far, we only needed $4$ counters, but, obviously, we lost the information of the length of factors corresponding to $x_4$. Now, we find the situation that it still remains to match the factors corresponding to the occurrences of $x_3$, i.\,e.\ the matching position $(3,10)$, but we cannot simply move the left head back to factor $3$, as the automaton does not know the length of the factors corresponding to $x_4$ anymore. However, we can move it to the left endmarker first, and then from there, over the factors corresponding to $x_1$ and $x_2$, to factor $3$. 
We can do this without storing the lengths of factors related to $x_4$ and $x_5$. Hence, $4$ counters are sufficient.
\end{example}
The above illustrated amendments to our approach further complicate the definition of Janus operating modes and we do not know anymore how to efficiently compute the Janus operating mode that is optimal with respect to the counter number. An exhaustive search of all Janus operating modes is inappropriate, as we would have to deal with a vast number of possible such Janus operating modes. In summary, we anticipate that these potential amendments to our approach lead to very challenging technical problems, and therefore we leave them for future research.

\bibliographystyle{elsarticle-num}
\bibliography{MSbib}
 
\end{document}